\documentclass[a4paper,USenglish]{lipics-v2018}


\usepackage[utf8]{inputenc}
\usepackage[T1]{fontenc}

\usepackage{amsmath,amssymb,amsfonts,amsthm}
\usepackage{mathtools}
\usepackage{nicefrac}
\usepackage{csquotes}
\usepackage{xspace}
\usepackage{color}
\usepackage{graphicx}
\usepackage[labelfont=bf]{caption}
\usepackage{subcaption}
\usepackage[inline]{enumitem}
\usepackage{pdflscape}

\graphicspath{{../}}

\newtheorem{proposition}{Proposition}

\renewcommand{\paragraph}[1]{\par\textbf{#1}}

\makeatletter
\newcommand*{\rom}[1]{\expandafter\@slowromancap\romannumeral #1@}
\makeatother

\clubpenalty10000
\widowpenalty10000
\displaywidowpenalty=10000

\newcommand{\ie}[0]{i.e.\xspace}

\newcommand{\eg}[0]{e.g.\xspace}

\usepackage{algorithm}
\usepackage[noend]{algpseudocode}
\newcommand{\Algblankline}{\item[]}

\newcommand{\dist}[0]{\ensuremath{\delta}}
\newcommand{\abs}[1]{\left\vert #1 \right\vert}
\newcommand{\norm}[1]{\left\lVert #1 \right\rVert}

\newcommand{\Oh}[1]{\ensuremath{\mathcal{O}(#1)}}
\newcommand{\dF}{d_F}
\newcommand{\sigspatial}{\textsc{Sigspatial}\xspace}
\newcommand{\characters}{\textsc{Characters}\xspace}
\newcommand{\geolife}{\textsc{GeoLife}\xspace}
\newcommand{\giscup}{GIS Cup\xspace}
\newcommand{\HeurClose}{\textsc{HeurClose}}
\newcommand{\HeurFar}{\textsc{HeurFar}}

\DeclareMathOperator{\argmin}{arg\,min}


\title{Walking the Dog Fast in Practice: Algorithm Engineering of the Fréchet Distance}


\author{Karl Bringmann}{Max Planck Institute for Informatics, Saarland Informatics Campus, Saarbrücken, Germany}{kbringma@mpi-inf.mpg.de}{}{} 

\author{Marvin Künnemann}{Max Planck Institute for Informatics, Saarland Informatics Campus, Saarbrücken, Germany}{marvin@mpi-inf.mpg.de}{}{} 

\author{André Nusser}{Max Planck Institute for Informatics, Saarland Informatics Campus, Saarbrücken, Germany}{anusser@mpi-inf.mpg.de}{}{} 

\authorrunning{K. Bringmann, M. Künnemann, and A. Nusser}

\Copyright{Karl Bringmann, Marvin Künnemann, and André Nusser}

\subjclass{
\ccsdesc[500]{Theory of computation~Computational geometry},
\ccsdesc[500]{Theory of computation~Design and analysis of algorithms}
}

\keywords{Curve simplification, Fréchet distance, algorithm engineering}

\category{}

\relatedversion{}

\supplement{\url{https://github.com/chaot4/frechet_distance}}

\funding{}


\EventEditors{John Q. Open and Joan R. Access}
\EventNoEds{2}
\EventLongTitle{42nd Conference on Very Important Topics (CVIT 2016)}
\EventShortTitle{CVIT 2016}
\EventAcronym{CVIT}
\EventYear{2016}
\EventDate{December 24--27, 2016}
\EventLocation{Little Whinging, United Kingdom}
\EventLogo{}
\SeriesVolume{42}
\ArticleNo{p}
\nolinenumbers
\hideLIPIcs  

\begin{document}

\maketitle
\begin{abstract}
\noindent
The Fréchet distance provides a natural and intuitive measure for the popular task of computing the similarity of two (polygonal) curves.
While a simple algorithm computes it in near-quadratic time, a strongly subquadratic algorithm cannot exist unless the Strong Exponential Time Hypothesis fails.
Still, fast practical implementations of the Fréchet distance, in particular for \emph{realistic} input curves, are highly desirable.
This has even lead to a designated competition, the ACM SIGSPATIAL GIS Cup 2017: Here, the challenge was to implement a near-neighbor data structure under the Fréchet distance.
The bottleneck of the top three implementations turned out to be precisely the decision procedure for the Fréchet distance.

In this work, we present a fast, certifying implementation for deciding the Fréchet distance, in order to (1) complement its pessimistic worst-case hardness by an empirical analysis on realistic input data and to (2) improve the state of the art for the GIS Cup challenge.
We experimentally evaluate our implementation on a large benchmark consisting of several data sets (including handwritten characters and GPS trajectories). Compared to the winning implementation of the GIS Cup, we obtain running time improvements of up to more than two orders of magnitude for the decision procedure and of up to a factor of 30 for queries to the near-neighbor data structure.
\end{abstract}


\section{Introduction}\label{sec:intro}

A variety of practical applications analyze and process trajectory data coming from different sources like GPS measurements, digitized handwriting, motion capturing, and many more. One elementary task on trajectories is to compare them, for example in the context of signature verification \cite{signatureverification}, map matching \cite{chambers2018map, wylie2014intermittent, chen2011approximate, buchin2009exact}, and clustering \cite{buchin2018, campbell2015clustering}. In this work we consider the Fréchet distance as curve similarity measure as it is arguably the most natural and popular one.
Intuitively, the Fréchet distance between two curves is explained using the following analogy. A person walks a dog, connected by a leash.
Both walk along their respective curve, with possibly varying speeds and without ever walking backwards.
Over all such traversals, we search for the ones which minimize the leash length, \ie, we minimize the maximal distance the dog and the person have during the traversal. 

Initially defined more than one hundred years ago \cite{frechet1906}, the Fréchet distance quickly gained popularity in computer science after the first algorithm to compute it was presented by Alt and Godau~\cite{alt_95}. In particular, they showed how to decide whether two length-$n$ curves have Fréchet distance at most $\delta$ in time $\Oh{n^2}$ by full exploration of a quadratic-sized search space, the so-called \emph{free-space} (we refer to Section~\ref{subsec:freespace_diagram} for a definition). Almost twenty years later, it was shown that, conditional on the Strong Exponential Time Hypothesis (SETH), there cannot exist an algorithm with running time $\Oh{n^{2-\epsilon}}$ for any $\epsilon > 0$ \cite{bringmann2014}.
Even for realistic models of input curves, such as $c$-packed curves \cite{driemel2012}, exact computation of the Fréchet distance requires time $n^{2-o(1)}$ under SETH~\cite{bringmann2014}. Only if we relax the goal to finding a $(1+\epsilon)$-approximation of the Fréchet distance, algorithms with near-linear running times in $n$ and $c$ on $c$-packed curves are known to exist~\cite{driemel2012, bk2017}.


It is a natural question whether these hardness results are mere theoretical worst-case results or whether computing the Fréchet distance is also hard in practice.
This line of research was particularly fostered by the research community in form of the  GIS Cup 2017~\cite{giscup17}.
In this competition, the 28 contesting teams  were challenged to give a fast implementation for the following problem: Given a data set of two-dimensional trajectories $\mathcal{D}$, answer queries that ask to return, given a curve $\pi$ and query distance $\delta$, all $\sigma \in \mathcal{D}$ with Fréchet distance at most $\delta$ to $\pi$. We call this the \emph{near-neighbor problem}.

The three top implementations \cite{sigspatial1, sigspatial2, sigspatial3} use multiple layers of heuristic filters and spatial hashing to decide as early as possible whether a curve belongs to the output set or not, and finally use an essentially exhaustive Fréchet distance computation for the remaining cases. Specifically, these implementations perform the following steps:
\begin{enumerate}[label=(\arabic*), ref=\arabic*]
	\setcounter{enumi}{-1}
	\item Preprocess $\mathcal{D}$. \label{enum:preprocessing}
\end{enumerate}
	On receiving a query with curve $\pi$ and query distance $\delta$:
\begin{enumerate}[label=(\arabic*), ref=\arabic*]
	\item Use spatial hashing to identify candidate curves  $\sigma \in \mathcal{D}$.\label{enum:spatialHashing}
	\item For each candidate $\sigma$, decide whether $\pi,\sigma$ have Fréchet distance $\le\delta$:\label{enum:decider}
	\begin{enumerate}[label=\alph*), ref=\alph*]
	\item Use heuristics (\emph{filters}) for a quick resolution in simple cases. \label{enum:filter}
	\item If unsuccessful, use a \emph{complete decision procedure via free-space exploration}.\label{enum:complete}
	\end{enumerate}
\end{enumerate}
Let us highlight the \emph{Fréchet decider} outlined in steps \ref{enum:decider}\ref{enum:filter} and \ref{enum:decider}\ref{enum:complete}: 
Here, \emph{filters} refer to sound, but incomplete Fréchet distance decision procedures, i.e., whenever they succeed to find an answer, they are correct, but they may return that the answer remains unknown. In contrast, a \emph{complete decision procedure via free-space exploration} explores a sufficient part of the free space (the search space) to always determine the correct answer. 
As it turns out, the bottleneck in all three implementations is precisely Step~\ref{enum:decider}\ref{enum:complete}, the complete decision procedure via free-space exploration.
Especially \cite{sigspatial1} improved upon the naive implementation of the free-space exploration by designing very basic pruning rules, which might be the advantage because of which they won the competition.
There are two directions for further substantial improvements over the cup implementations: (1) increasing the range of instances covered by fast filters and (2) algorithmic improvements of the exploration of the reachable free-space.

\paragraph{Our Contribution.}
We develop a fast, practical Fréchet distance implementation. To this end, we give a complete decision procedure via free-space exploration that uses a divide-and-conquer interpretation of the Alt-Godau algorithm for the Fréchet distance and optimize it using sophisticated pruning rules. These pruning rules greatly reduce the search space for the realistic benchmark sets we consider -- this is surprising given that simple constructions generate hard instances which require the exploration of essentially the full quadratic-sized search space \cite{bringmann2014, bringmannMulzer2015}. Furthermore, we present improved filters that are sufficiently fast compared to the complete decider. Here, the idea is to use adaptive step sizes (combined with useful heuristic tests) to achieve essentially ``sublinear'' time behavior for testing if an instance can be resolved quickly. Additionally, our implementation is certifying (see~\cite{McConnellMNS11} for a survey on certifying algorithms), meaning that for every decision of curves being close/far, we provide a short proof (certificate) that can be checked easily; we also implemented a computational check of these certificates. See Section \ref{sec:certificates} for details.

An additional contribution of this work is the creation of benchmarks to make future implementations more easily comparable. We compile benchmarks both for the near-neighbor problem (Steps \ref{enum:preprocessing} to \ref{enum:decider}) and for the decision problem (Step \ref{enum:decider}). For this, we used publicly available curve data and created queries in a way that should be representative for the performance analysis of an implementation. As data sets we use the GIS Cup trajectories~\cite{sigspatial_dataset}, a set of handwritten characters called the Character Trajectories Data Set~\cite{characters_dataset} from \cite{Dua:2017}, and the GeoLife data set~\cite{geolife_dataset} of Microsoft Research \cite{geolife1, geolife2, geolife3}. Our benchmarks cover different distances and also curves of different similarity, giving a broad overview over different settings.
We make the source code as well as the benchmarks publicly available to enable independent comparisons with our approach.\footnote{Code and benchmarks are available at: \url{https://github.com/chaot4/frechet_distance}} Additionally, we particularly focus on making our implementation easily readable to enable and encourage others to reuse the code.

\paragraph{Evaluation.}
The GIS Cup 2017 had 28 submissions, with the top three submissions\footnote{The submissions were evaluated \enquote{for their correctness and average performance on a[sic!] various large trajectory databases and queries}. Additional criteria were the following: \enquote{We will use the total elapsed wall clock time as a measure of performance. For breaking ties, we will first look into the scalability behavior for more and more queries on larger and larger datasets. Finally, we break ties on code stability, quality, and readability and by using different datasets.}} (in decreasing order) due to Bringmann and Baldus~\cite{sigspatial1}, Buchin et al.~\cite{sigspatial2}, and Dütsch and Vahrenhold~\cite{sigspatial3}. We compare our implementation with all of them by running their implementations on our new benchmark set for the near-neighbor problem and also comparing to the improved decider of \cite{sigspatial1}. The comparison shows significant speed-ups up to almost a factor of 30 for the near-neighbor problem and up to more than two orders of magnitude for the decider.

\paragraph{Related Work.}
The best known algorithm for deciding the Fréchet distance runs in time $O(n^2 \frac{(\log \log n)^{2}}{\log n})$ on the word RAM~\cite{BuchinBMM17}.
This relies on the Four Russians technique and is mostly of theoretical interest.
There are many variants of the Fréchet distance, \eg, the discrete Fréchet distance~\cite{AgarwalBKS14, EiterM94}.
After the GIS Cup 2017, several practical papers studying aspects of the Fréchet distance appeared~\cite{ACKGS18, CeccarelloDS18,Wei2018}. Some of this work~\cite{ACKGS18, CeccarelloDS18} addressed how to improve upon the spatial hashing step (Step~\ref{enum:spatialHashing}) if we relax the requirement of exactness. Since this is orthogonal to our approach of improving the complete decider, these improvements could possibly be combined with our algorithm.
The other work~\cite{Wei2018} neither compared with the GIS Cup implementations, nor provided their source code publicly to allow for a comparison, which is why we have to ignore it here.

\paragraph{Organization.} First, in Section~\ref{sec:prelims}, we present all the core definitions. Subsequently, we explain our complete decider in Section \ref{sec:complete_decider}. The following section then explains the decider and its filtering steps. Then, in Section \ref{sec:query}, we present a query data structure which enables us to compare to the GIS Cup submissions. Section~\ref{sec:implementation_details} contains some details regarding the implementation to highlight crucial points we deem relevant for similar implementations. We conduct extensive experiments in Section \ref{sec:experiments}, detailing the improvements over the current state of the art by our implementation. Finally, in Section \ref{sec:certificates}, we describe how we make our implementation certifying and evaluate the certifying code experimentally.

\section{Preliminaries} \label{sec:prelims}
Our implementation as well as the description are restricted to two dimensions, however, the approach can also be generalized to polygonal curves in $d$ dimensions.
Therefore, a \emph{curve} $\pi$ is defined by its \emph{vertices} $\pi_1, \dots, \pi_n \in \mathbb{R}^2$ which are connected by straight lines.
We also allow continuous indices as follows. For $p = i + \lambda$ with $i \in \{1, \dots, n\}$ and $\lambda \in [0,1]$, let
\[
	\pi_p \coloneqq (1-\lambda) \pi_i + \lambda \pi_{i+1}.
\]
We call the $\pi_p$ with $p \in [1,n]$ the \emph{points} on $\pi$. A subcurve of $\pi$ which starts at point $p$ and ends at point $q$ on $\pi$ is denoted by $\pi_{p \dots q}$. In the remainder, we denote the \emph{number of vertices} of $\pi$ (resp.\ $\sigma$) with $n$ (resp.\ $m$) if not stated otherwise. We denote the length of a curve $\pi$ by $\norm{\pi}$, \ie, the sum of the Euclidean lengths of its line segments. Additionally, we use $\norm{v}$ for the Euclidean norm of a vector $v \in \mathbb{R}^2$.
For two curves $\pi$ and $\sigma$, the \emph{Fréchet distance} $d_F(\pi, \sigma)$ is defined as
\[
	d_F(\pi, \sigma) \coloneqq \inf\limits_{\substack{f \in \mathcal{T}_n\\g \in \mathcal{T}_m}} \max_{t \in [0,1]} \norm{\pi_{f(t)} - \sigma_{g(t)}},
\]
where $\mathcal{T}_k$ is the set of monotone and continuous functions $f: [0,1] \to [1,k]$. We define a \emph{traversal} as a pair $(f,g) \in \mathcal{T}_n \times \mathcal{T}_m$. Given two curves $\pi, \sigma$ and a query distance $\dist$, we call them \emph{close} if $d_F(\pi, \sigma) \leq \delta$ and \emph{far} otherwise.
There are two problem settings that we consider in this paper:

\begin{description}
	\item[Decider Setting:]
	Given curves $\pi, \sigma$ and a distance $\delta$, decide whether $d_F(\pi, \sigma) \leq \delta$.
	(With such a decider, we can compute the exact distance by using parametric search in theory and binary search in practice.)

	\item[Query Setting:]
	Given a curve dataset $\mathcal{D}$, build a data structure that on query $(\pi, \delta)$ returns all $\sigma \in \mathcal{D}$ with $d_F(\pi, \sigma) \leq \delta$.
\end{description}
We mainly focus on the decider in this work. To allow for a comparison with previous implementations (which are all in the query setting), we also run experiments with our decider plugged into a data structure for the query setting.

\subsection{Preprocessing} \label{sec:preprocessing}
When reading the input curves we immediately compute additional data which is stored with each curve:
\begin{description}
	\item[Prefix Distances:] To be able to quickly compute the curve length between any two vertices of $\pi$, we precompute the prefix lengths, \ie, the curve lengths $\norm{\pi_{1 \dots i}}$ for every $i \in \{2, \dots, n\}$. We can then compute the curve length for two indices $i < i'$ on $\pi$ by $\norm{\pi_{i \dots i'}} = \norm{\pi_{1 \dots i'}} - \norm{\pi_{1 \dots i}}$.
	\item[Bounding Box:] We compute the bounding box of all curves, which is a simple coordinate-wise maximum and minimum computation.
\end{description}
Both of these preprocessing steps are extremely cheap as they only require a single pass over all curves, which we anyway do when parsing them. In the remainder of this work we assume that this additional data was already computed, in particular, we do not measure it in our experiments as it is dominated by reading the curves.

\section{Complete Decider} \label{sec:complete_decider}

The key improvement of this work lies in the complete decider via free-space exploration. Here, we use a divide-and-conquer interpretation of the algorithm of Alt and Godau~\cite{alt_95} which is similar to \cite{sigspatial1} where a free-space diagram is built recursively. This interpretation allows us to prune away large parts of the search space by designing powerful \emph{pruning rules} identifying parts of the search space that are irrelevant for determining the correct output. Before describing the details, we formally define the free-space diagram.

\subsection{Free-Space Diagram} \label{subsec:freespace_diagram}
The free-space diagram was first defined in \cite{alt_95}. Given two polygonal curves $\pi$ and $\sigma$ and a distance \dist, it is defined as the set of all pairs of indices of points from $\pi$ and $\sigma$ that are close to each other, \ie,
\[
	F \coloneqq \{ (p,q) \in [1,n] \times [1,m] \mid \norm{\pi_p - \sigma_q} \leq \dist \}.
\]
For an example see Figure \ref{fig:free-space_diagram}.
A \emph{path} from $a$ to $b$ in the free-space diagram $F$ is defined as a continuous mapping $P: [0,1] \to F$ with $P(0) = a$ and $P(1) = b$.
A path $P$ in the free-space diagram is \emph{monotone} if $P(x)$ is component-wise at most $P(y)$ for any $0 \leq x \leq y \leq 1$. The \emph{reachable space} is then defined as
\[
	R \coloneqq \{ (p,q) \in F \mid \text{there exists a monotone path from $(1,1)$ to $(p,q)$ in $F$}\}.
\]
Figure \ref{fig:reachable_free-space_diagram} shows the reachable space for the free-space diagram of Figure \ref{fig:free-space_diagram}. It is well known that $d_F(\pi, \sigma) \leq \dist$ if and only if $(n,m) \in R$.

\begin{figure}
	\centering
	\includegraphics[width=\textwidth]{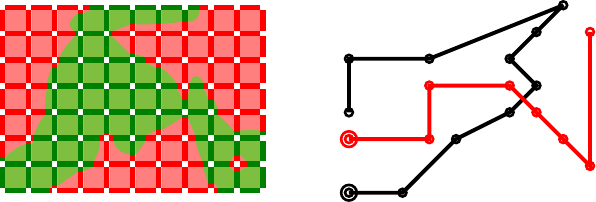}
	\caption{Example of a free-space diagram for curves $\pi$ (black) and $\sigma$ (red). The doubly-circled vertices mark the start. The free-space, \ie, the pairs of indices of points which are close, is colored green. The non-free areas are colored red. The threshold distance $\delta$ is roughly the distance between the first vertex of $\sigma$ and the third vertex of $\pi$.}
	\label{fig:free-space_diagram}
\end{figure}
\begin{figure}
	\centering
	\includegraphics[width=.5\textwidth]{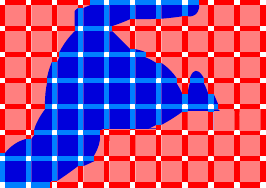}
	\caption{Reachable space of the free-space diagram in Figure \ref{fig:free-space_diagram}. The reachable part is blue and the non-reachable part is red. Note that the reachable part is a subset of the free-space.}
	\label{fig:reachable_free-space_diagram}
\end{figure}

This leads us to a simple dynamic programming algorithm to decide whether the Fréchet distance of two curves is at most some threshold distance. We iteratively compute $R$ starting from $(1,1)$ and ending at $(n,m)$, using the previously computed values. As $R$ is potentially a set of infinite size, we have to discretize it. A natural choice is to restrict to cells. The \emph{cell} of $R$ with coordinates $(i,j) \in \{1, \dots, n-1\} \times \{1, \dots, m-1\}$ is defined as $C_{i,j} \coloneqq [i,i+1] \times [j,j+1]$. This is a natural choice as given $R \cap C_{i-1,j}$ and $R \cap C_{i,j-1}$, we can compute $R \cap C_{i,j}$ in constant time; this follows from the simple fact that $F \cap C_{i,j}$ is convex~\cite{alt_95}. We call this computation of the outputs of a cell the \emph{cell propagation}. 
This algorithm runs in time $\mathcal{O}(nm)$ and was introduced by Alt and Godau~\cite{alt_95}.

\subsection{Basic Algorithm}
For integers $1 \leq i \leq i' \leq n, 1 \leq j \leq j' \leq m$ we call the set $B = [i,i'] \times [j,j']$ a \emph{box}. We denote the left/right/bottom/top \emph{boundaries} of $B$ by $B_l = \{i\} \times [j,j'], B_r = \{i'\} \times [j,j'], B_b = [i,i'] \times \{j\}, B_t = [i,i'] \times \{j'\}$. The \emph{left input} of $B$ is $B_l^R = B_l \cap R$, and its \emph{bottom input} is $B_b^R = B_b \cap R$. Similarly, the \emph{right/top output} of $B$ is $B_r^R = B_r \cap R$, $B_t^R = B_t \cap R$. A box is a cell if $i+1=i'$ and $j+1 = j'$.
We always denote the lower left corner of a box by $(i,j)$ and the top right by $(i',j')$, if not mentioned otherwise.

A recursive variant of the standard free-space decision procedure is as follows: Start with $B = [1,n] \times [1,m]$. At any recursive call, if $B$ is a cell, then determine its outputs from its inputs in constant time, as described by \cite{alt_95}. Otherwise, split $B$ vertically or horizontally into $B_1, B_2$ and first compute the outputs of $B_1$ from the inputs of $B$ and then compute the outputs of $B_2$ from the inputs of $B$ and the outputs of $B_1$. In the end, we just have to check $(n,m) \in R$ to decide whether the curves are close or far. This is a constant-time operation after calculating all outputs.

Now comes the main idea of our approach: we try to avoid recursive splitting by directly computing the outputs for non-cell boxes using certain rules. We call them \emph{pruning rules} as they enable pruning large parts of the recursion tree induced by the divide-and-conquer approach. Our pruning rules are heuristic, meaning that they are not always applicable, however, we show in the experiments that on practical curves they apply very often and therefore massively reduce the number of recursive calls.
The detailed pruning rules are described Section \ref{sec:pruning_rules}. 
Using these rules, we change the above recursive algorithm as follows. In any recursive call on box $B$, we first try to apply the pruning rules. If this is successful, then we obtained the outputs of $B$ and we are done with this recursive call. Otherwise, we perform the usual recursive splitting. Corresponding pseudocode is shown in Algorithm~\ref{alg:recursive_decider}.

\begin{algorithm}[t]
\caption{Recursive Decider of the Fréchet Distance}
\begin{algorithmic}[1]
\Procedure{DecideFréchetDistance}{$\pi, \sigma$}
\State \Call{ComputeOutputs}{$\pi, \sigma, [1,n] \times [1,m]$}
\State \Return $[(n,m) \in R]$
\EndProcedure
\Algblankline
\Procedure{ComputeOutputs}{$\pi, \sigma, B = [i,i'] \times [j,j']$}
\If{$B$ is a cell}
\State compute outputs by cell propagation
\Else
\State use pruning rules \rom{1} to \rom{4} to compute outputs of $B$
	\If{not all outputs have been computed}
		\If{$j' - j > i' - i$} \Comment{split horizontally}
			\State $B_1 = [i,i'] \times [j, \lfloor (j+j')/2 \rfloor]$
			\State $B_2 = [i,i'] \times [\lfloor (j+j')/2 \rfloor, j']$
		\Else \Comment{split vertically}
			\State $B_1 = [i, \lfloor (i+i')/2 \rfloor] \times [j,j']$
			\State $B_2 = [\lfloor (i+i')/2 \rfloor, i'] \times [j,j']$
		\EndIf
		\State \Call{ComputeOutputs}{$\pi, \sigma, B_1$}
		\State \Call{ComputeOutputs}{$\pi, \sigma, B_2$}
	\EndIf
\EndIf
\EndProcedure
\end{algorithmic}
\label{alg:recursive_decider}
\end{algorithm}

In the remainder of this section, we describe our pruning rules and their effects.

\subsection{Pruning Rules} \label{sec:pruning_rules}
In this section we introduce the rules that we use to compute outputs of boxes which are above cell-level in certain special cases.
Note that we aim at catching special cases which occur often in practice, as we cannot hope for improvements on adversarial instances due to the conditional lower bound of \cite{bringmann2014}. Therefore, we make no claims whether they are applicable, only that they are sound and fast. In what follows, we call a boundary \emph{empty} if its intersection with $R$ is $\emptyset$.

\subsubsection*{Rule \rom{1}: Empty Inputs}
The simplest case where we can compute the outputs of a box $B$ is if both inputs are empty, \ie $B_b^R = B_l^R = \emptyset$. In this case no propagation of reachability is possible and thus the outputs are empty as well, \ie $B_t^R = B_r^R = \emptyset$. See Figure \ref{fig:empty_inputs} for an example.

\begin{figure}
	\includegraphics[width=\textwidth]{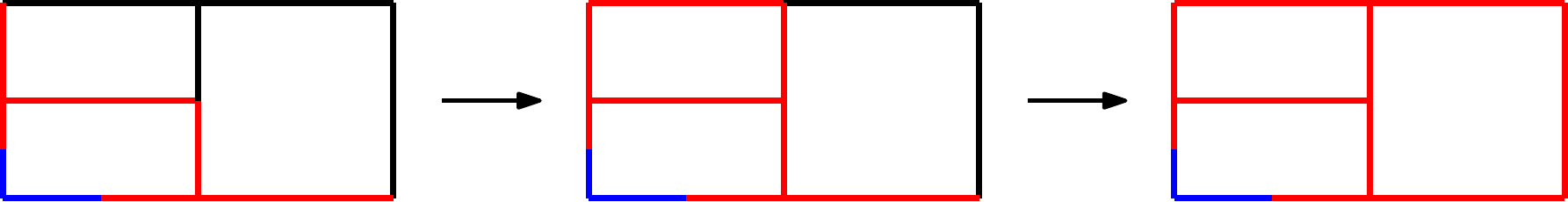}
	\caption{Output computation of a box when inputs are empty. First we can compute the outputs of the top left box and then the outputs of the right box. In this example, we then know that the curves have a Fréchet distance greater than $\delta$ as $(n,m)$ is not reachable.}
	\label{fig:empty_inputs}
\end{figure}

\subsubsection*{Rule \rom{2}: Shrink Box}
Instead of directly computing the outputs, this rule allows us to shrink the box we are currently working on, which reduces the problem size. Assume that for a box $B$ we have that $B_b^R = \emptyset$ and the lowest point of $B_l^R$ is $(i,j_{\min})$ with $j_{\min} > j$. In this case, no pair in $[i,i'] \times [j,j_{\min}]$ is reachable. Thus, we can shrink the box to the coordinates $[i,i'] \times [\lfloor j_{\min} \rfloor,j']$ without losing any reachability information. An equivalent rule can be applied if we swap the role of $B_b$ and $B_l$. See Figure~\ref{fig:shrink_box} for an example of applying this rule.

\begin{figure}
	\includegraphics[width=\textwidth]{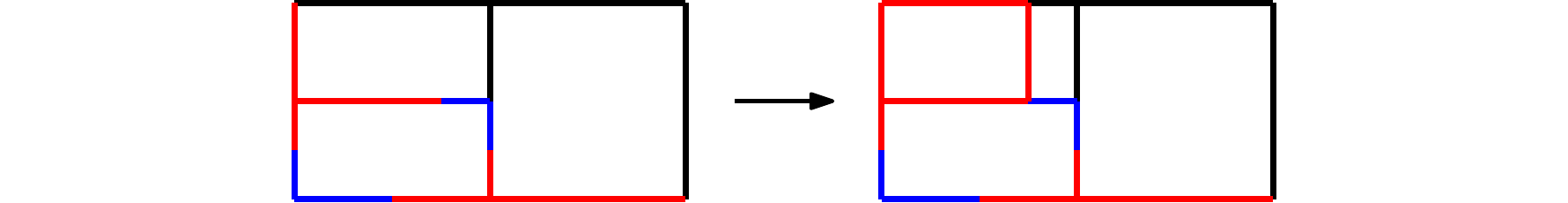}
	\caption{This is an example of shrinking a box in case one of the inputs is empty and the other one starts with an empty part. In this example the top left box has an empty input on the left and the start of the bottom input is empty as well. Thus, we can shrink the box to the right part.}
	\label{fig:shrink_box}
\end{figure}

\subsubsection*{Rule \rom{3}: Simple Boundaries}
\emph{Simple boundaries} are boundaries of a box that contain at most one free section.
To define this formally, a set $\mathcal{I} \subseteq [1,n] \times [1,m]$ is called an \emph{interval} if $\mathcal{I} = \emptyset$ or $\mathcal{I} = \{p\} \times [q,q']$ or $\mathcal{I} = [q,q'] \times \{p\}$ for real $p$ and an interval $[q,q']$. In particular, the four boundaries of a box $B = [i,i'] \times [j,j']$ are intervals. We say that an interval $\mathcal{I}$ is \emph{simple} if $\mathcal{I} \cap F$ is again an interval.
Geometrically, we have a free interval of a point $\pi_p$ and a curve $\sigma_{q \dots q'}$ (which is the form of a boundary in the free-space diagram) if the circle of radius \dist{} around $\pi_p$ intersects $\sigma_{q \dots q'}$ at most twice. See Figure \ref{fig:simple_interval} for an example. We call such a boundary simple because it is of low complexity, which we can exploit for pruning.

\begin{figure}
	\centering
	\includegraphics[width=.5\textwidth]{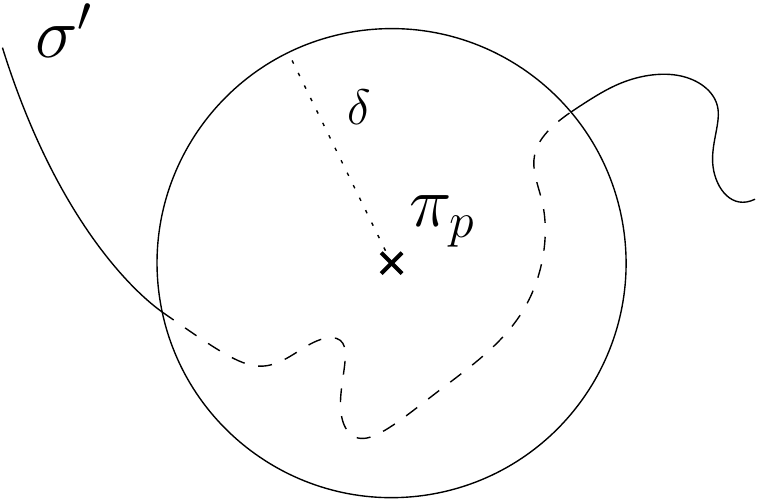}
	\caption{Example of a point $\pi_p$ and a curve $\sigma'$ which lead to a simple boundary.}
	\label{fig:simple_interval}
\end{figure}

There are three pruning rules that we do based on simple boundaries (see Figure \ref{fig:simple_interval_rules} for visualizations). They are stated here for the top boundary $B_t$, but symmetric rules apply to $B_r$. Later, in Section \ref{subsec:simple_boundaries_impl}, we then explain how to actually compute simple boundaries, \ie, also how to compute $B_t \cap F$. The pruning rules are:
\begin{enumerate}[label=(\alph*)]
	\item If $B_t$ is simple because $B_t \cap F$ is empty then we also know that the output of this boundary is empty. Thus, we are done with $B_t$.
	\item Suppose that $B_t$ is simple and, more specifically, of the form that it first has a free and then a non-free part; in other words, we have $(i,j') \in B_t \cap F$. Due to our recursive approach, we already computed the left inputs of the box and thus know whether the top left corner of the box is reachable, \ie whether $(i,j') \in R$.
		If this is the case, then we also know the reachable part of our simple boundary: Since $(i,j') \in R$ and $B_t \cap F$ is an interval containing $(i,j')$, we conclude that $B_t^R = B_t \cap F$ and we are done with $B_t$.
	\item
	Suppose that $B_t$ is simple, but the leftmost point $(i_{\min}, j')$ of $B_t \cap F$ has $i_{\min} > i$. In this case, we try to certify that $(i_{\min},j') \in R$, because then it follows that $B_t^R = B_t \cap F$ and we are done with $B_t$.
	To check for reachability of $(i_{\min},j')$, we try to propagate the reachability through the inside of the box, which in this case means to propagate it from the bottom boundary.
	We test whether $(i_{\min},j)$ is in the input, \ie, if $(i_{\min},j) \in B_b^R$, and whether $\{i_{\min}\} \times [j,j'] \subseteq F$ (by slightly modifying the algorithm for simple boundary computations). If this is the case, then we can reach every point in $B_t \cap F$ from $(i_{\min},j)$ via $\{i_{\min}\} \times [j,j']$.
	Note that this is an operation in the complete decider where we explicitly use the \emph{inside} of a box and not exclusively operate on its boundaries.
\end{enumerate}
We also use symmetric rules by swapping \enquote{top} with \enquote{right} and \enquote{bottom} with \enquote{left}.

\begin{figure}
	\includegraphics[width=\textwidth]{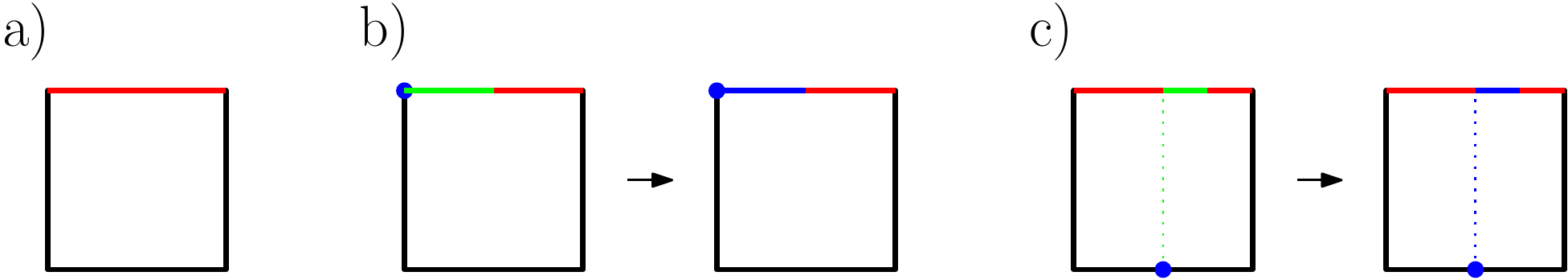}
	\caption{Visualization of the rules for computing outputs using simple boundaries. All three cases are visualized with the top boundary being simple. In a) the boundary is non-free and therefore no point on it can be reachable. In b) the boundary's beginning is free and reachable, enabling us to propagate the reachability to the entire free interval. In c) we can propagate the reachability of a point on the bottom boundary, using a free interval inside the box, to the beginning of the free interval of the top boundary and thus decide the entire boundary. The rules for the right boundary being simple are equivalent.}
	\label{fig:simple_interval_rules}
\end{figure}

\subsubsection*{Rule \rom{4}: Boxes at Free-Space Diagram Boundaries}
The boundaries of a free-space diagram are a special form of boundary which allows us to introduce an additional rule. Consider a box $B$ which touches the top boundary of the free-space diagram, \ie, $B = [i,i'] \times [j,m]$. Suppose the previous rules allowed us to determine the output for $B_r^R$. Since any valid traversal from $(1,1)$ to $(n,m)$ passing through~$B$ intersects~$B_r$, the output $B_t^R$ is not needed anymore, and we are done with $B$. A symmetric rule applies to boxes which touch the right boundary of the free-space diagram.


\subsection{Implementation Details of Simple Boundaries} \label{subsec:simple_boundaries_impl}
It remains to describe how we test whether a boundary is simple, and how we determine the free interval of a simple boundary.
One important ingredient for the fast detection of simple boundaries are two simple heuristic checks that check whether two polygonal curves are close or far, respectively. The former check was already used in \cite{sigspatial1}. We first explain these heuristic checks, and then explain how to use them for the detection of simple boundaries.

\paragraph{Heuristic check whether two curves are close.}
Given two subcurves $\pi' \coloneqq \pi_{i \dots i'}$ and $\sigma' \coloneqq \sigma_{j \dots j'}$, this filter heuristically tests whether $d_F(\pi', \sigma') \leq \delta$. Let $i_c \coloneqq \lfloor\frac{i+i'}{2}\rfloor$ and $j_c \coloneqq \lfloor\frac{j+j'}{2}\rfloor$ be the indices of the midpoints of $\pi'$ and $\sigma'$ (with respect to hops). Then $d_F(\pi', \sigma') \leq \dist$ holds if
\[
	\max\{\norm{\pi_{i \dots i_c}}, \norm{\pi_{i_c \dots i'}}\} + \norm{\pi_{i_c} - \sigma_{j_c}} + \max\{\norm{\sigma_{j \dots j_c}}, \norm{\sigma_{j_c \dots j'}}\} \leq \dist.
\]
The triangle equality ensures that this is an upper bound on all distances between two points on the curves. For a visualization, see Figure \ref{fig:heuristic_check_close}. Observe that all curve lengths that need to be computed in the above equation can be determined quickly due to our preprocessing, see Section \ref{sec:preprocessing}. We call this procedure $\HeurClose(\pi', \sigma', \dist)$. 

\begin{figure}
	\begin{subfigure}[b]{0.38\textwidth}
		\includegraphics[width=\textwidth]{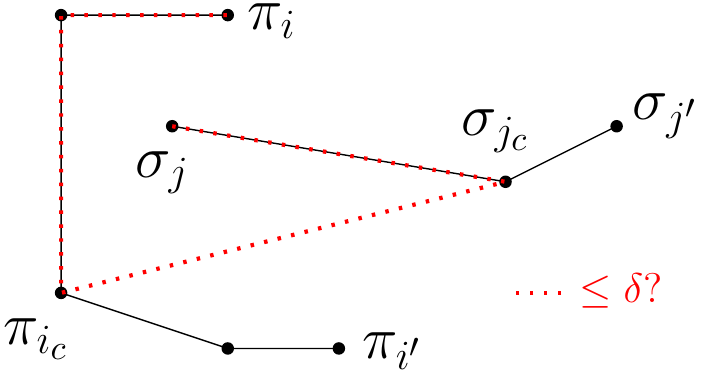}
		\subcaption{\HeurClose{}}
		\label{fig:heuristic_check_close}
	\end{subfigure}
	\hfill
	\begin{subfigure}[b]{0.55\textwidth}
		\includegraphics[width=\textwidth]{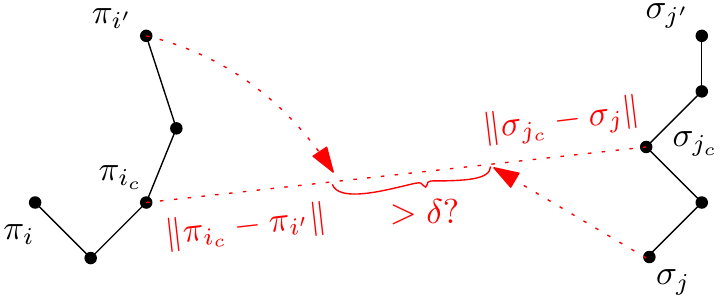}
		\subcaption{\HeurFar{}}
		\label{fig:heuristic_check_far}
	\end{subfigure}
	\caption{Visualizations of heuristic checks \HeurClose{} and \HeurFar{}.}
\end{figure}

\paragraph{Heuristic check whether two curves are far.} Symmetrically, we can test whether all pairs of points on $\pi'$ and $\sigma'$ are far by testing
\[
	\norm{\pi_{i_c} - \sigma_{j_c}} - \max\{\norm{\pi_{i \dots i_c}}, \norm{\pi_{i_c \dots i'}}\} - \max\{\norm{\sigma_{j \dots j_c}}, \norm{\sigma_{j_c \dots j'}}\} > \dist.
\]
We call this procedure $\HeurFar(\pi', \sigma', \dist)$.

\paragraph{Computation of Simple Boundaries.} Recall that an interval is defined as $I = \{p\} \times [q,q']$ (intervals of the form $[q,q'] \times \{p\}$ are handled symmetrically).
The naive way to decide whether interval $I$ is simple would be to go over all the segments of $\sigma_{q \dots q'}$ and compute the intersection with the circle of radius \dist{} around $\pi_p$. However, this is too expensive because
\begin{enumerate*}[label=(\roman*)]
	\item computing the intersection of a disc and a segment involves taking a square root, which is an expensive operation with a large constant running time, and\label{list:simple_interval_comp1}
	\item iterating over all segments of $\sigma_{q \dots q'}$ incurs a linear factor in $n$ for large boxes, while we aim at a logarithmic dependence on~$n$ for simple boundary detection.\label{list:simple_interval_comp2}
\end{enumerate*}

We avoid these issues by resolving long subcurves $\sigma_{j..j+s}$ using our heuristic checks (\HeurClose{}, \HeurFar{}). Here, $s$ is an adaptive step size that grows whenever the heuristic checks were applicable, and shrinks otherwise. 
See Algorithm \ref{alg:get_simple_interval} for pseudocode of our simple boundary detection. It is straightforward to extend this algorithm to not only detect whether a boundary is simple, but also compute the free interval of a simple boundary; we call the resulting procedure  \textsc{SimpleBoundary}.

\begin{algorithm}[t]
\begin{algorithmic}[1]
\Procedure{isSimpleBoundary}{$\pi_p, \sigma_{q \dots q'}$}
\If {$\HeurFar(\pi_p, \sigma_{q \dots q'}, \dist)$ or $\HeurClose(\pi_p, \sigma_{q \dots q'}, \dist)$}
\State \Return \enquote{simple}
\EndIf
\State
\State $C \gets
\begin{cases}
	\{\sigma_q\} & ,\text{if } \norm{p - \sigma_q} \leq \dist \\
	\emptyset & ,\text{otherwise}
\end{cases}$ \Comment{set of change points}
\State $s \gets 1, j \gets q$
\While {$j < q'$}
\If {$\HeurClose(\pi_p, \sigma_{j \dots j+s}, \dist)$} \label{line:heurb}
		\State $j \gets j+s$
		\State $s \gets 2s$
		\ElsIf {$\HeurFar(\pi_p, \sigma_{j \dots j+s}, \dist)$} \label{line:heurc}
		\State $j \gets j+s$
		\State $s \gets 2s$
	\ElsIf {$s > 1$}
		\State $s \gets s/2$
	\Else
		\State $P \gets \{ j' \in (j,j+1] \mid \norm{\pi_p - \sigma_{j'}} = \dist\}$ 
		\State $C \gets C \cup P$
		\State $j \gets j+1$
		\If {$\abs{C} > 2$}
			\State \Return \enquote{not simple}
		\EndIf
	\EndIf
\EndWhile
\State
\State \Return \enquote{simple}
\EndProcedure
\end{algorithmic}
\caption{Checks if the boundary in the free-space diagram corresponding to $\{p\} \times [q,q']$ is simple.}
\label{alg:get_simple_interval}
\end{algorithm}

\subsection{Effects of Combined Pruning Rules}
All the pruning rules presented above can in practice lead to a reduction of the number of boxes that are necessary to decide the Fréchet distance of two curves. We exemplify this on two real-world curves; see Figure~\ref{fig:freespace_diagram_light} on page~\pageref{fig:freespace_diagram_light} for the curves and their corresponding free-space diagram. We explain in the following where the single rules come into play. For \emph{Box 1} we apply Rule \rom{3}b twice -- for the top and right output. The top boundary of \emph{Box 2} is empty and thus we computed the outputs according to Rule \rom{3}a. Note that the right boundary of this box is on the right boundary of the free-space diagram and thus we do not have to compute it according to Rule \rom{4}. For \emph{Box 3} we again use Rule \rom{3}b for the top, but we use Rule \rom{3}c for the right boundary -- the blue dotted line indicates that the reachability information is propagated through the box. For \emph{Box 4} we first use Rule \rom{2} to move the bottom boundary significantly up, until the end of the left empty part; we can do this because the bottom boundary is empty and the left boundary is simple, starting with an empty part. After two splits of the remaining box, we see that the two outputs of the leftmost box are empty as the top and right boundaries are non-free, using Rule \rom{3}a. For the remaining two boxes we use Rule \rom{1} as their inputs are empty.

\begin{figure}
	\includegraphics[width=\textwidth]{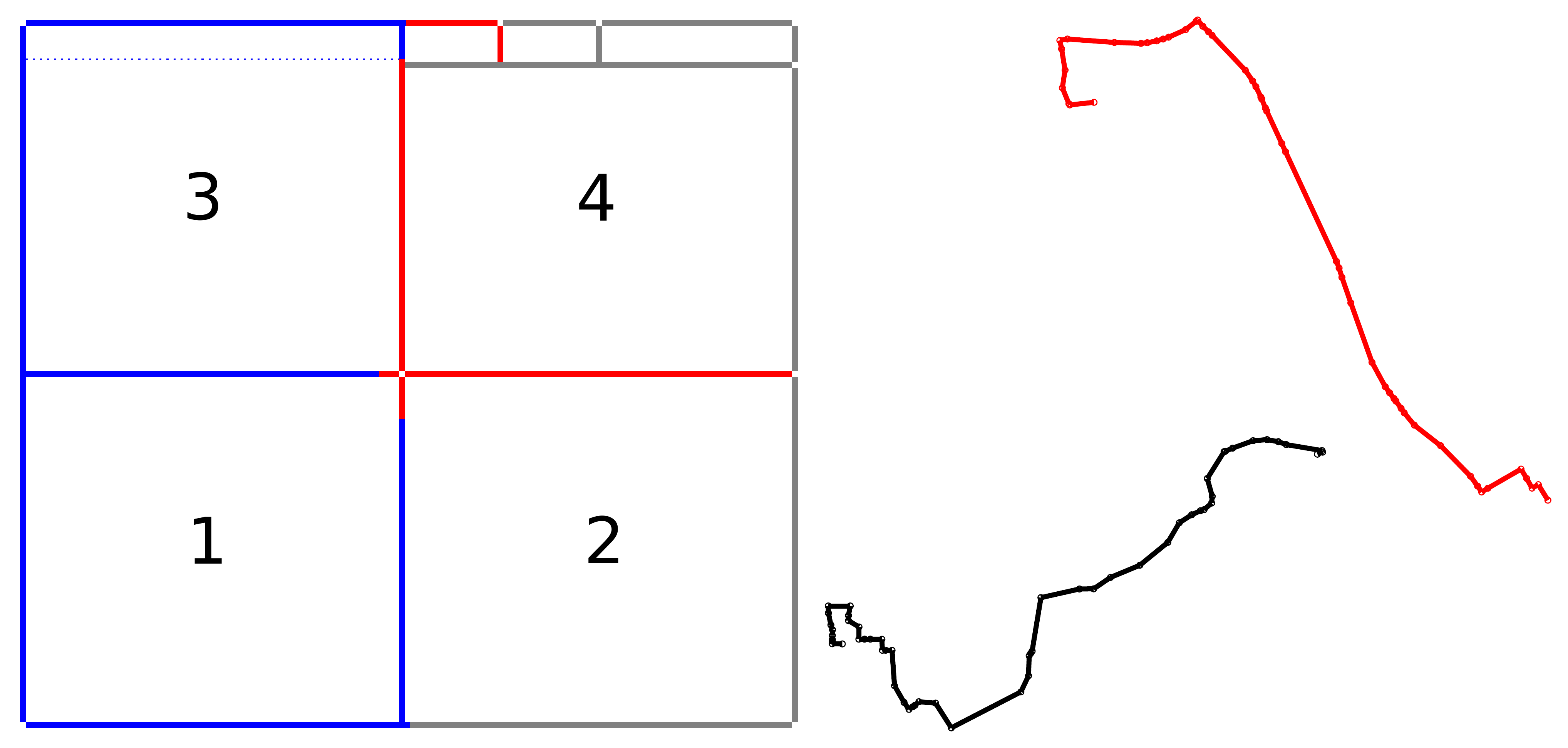}
	\caption{A free-space diagram as produced by our final implementation (left) with the corresponding curves (right). The curves are taken from the SIGSPATIAL dataset. We number the boxes in the third level of the recursion from 1 to 4.}
	\label{fig:freespace_diagram_light}
\end{figure}

This example illustrates how propagating through a box (in \emph{Box 3}) and subsequently moving a boundary (in \emph{Box 4}) leads to pruning large parts. Additionally, we can see how using simple boundaries leads to early decisions and thus avoids many recursive steps. In total, we can see how all the explained pruning rules together lead to a free-space diagram with only twelve boxes, \ie, twelve recursive calls, for curves with more than 50 vertices and more than 1500 reachable cells.
Figure \ref{fig:pruning_rules_example} shows what effects the pruning rules have by introducing them one by one in an example.

\begin{figure}
	\centering
	\includegraphics[width=.8\textwidth]{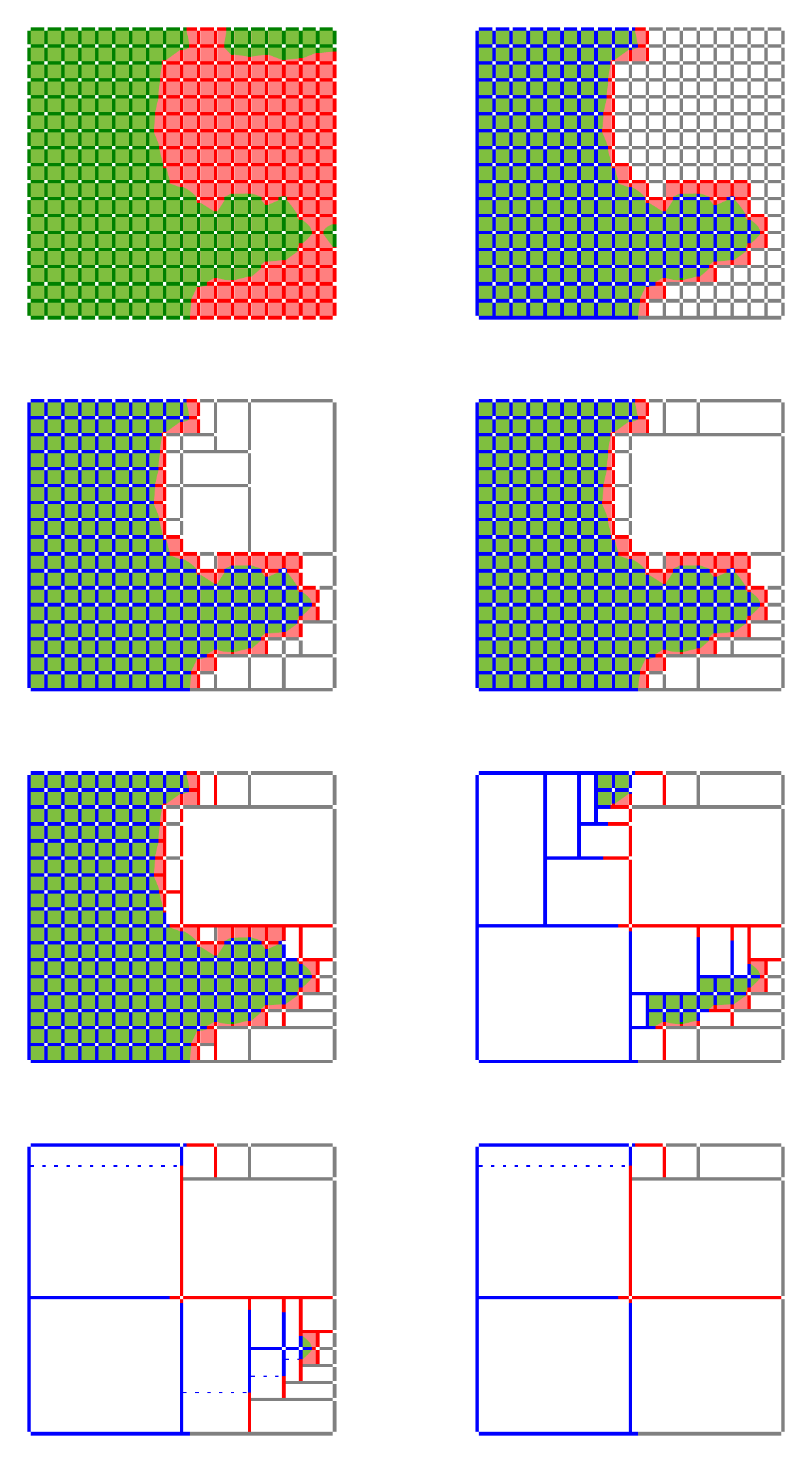}
	\caption{A decider example introducing the pruning rules one by one. They are introduced from top to bottom and left to right. The images in this order depict: the free-space diagram, the reachable space, after introducing Rule \rom{1}, Rule \rom{2}, Rule \rom{3}a, Rule \rom{3}b, Rule \rom{3}c, Rule \rom{4}, and finally the free-space diagram with all pruning rules enabled. The curves of this example are shown in Figure \ref{fig:freespace_diagram_light}.}
	\label{fig:pruning_rules_example}
\end{figure}

\section{Decider with Filters} \label{sec:decider}
Now that we introduced the complete decider, we are ready to present the decider. We first give a high-level overview.
\subsection{Decider}
The decider can be divided into two parts:
\begin{enumerate}
	\item Filters (see this section)
	\item Complete decider via free-space exploration (see Section~\ref{sec:complete_decider})
\end{enumerate}
As outlined in Section~\ref{sec:intro}, we first try to determine the correct output by using fast but incomplete filtering mechanisms and only resort to the slower complete decider presented in the last section if none of the heuristic deciders (\emph{filters}) gave a result.
The high-level pseudocode of the decider is shown in Algorithm \ref{alg:overview}. 

\begin{algorithm}[t]
\begin{algorithmic}[1]
\Procedure{Decider}{$\pi, \sigma, \delta$}
\If {start points $\pi_1, \sigma_1$ or end points $\pi_n, \sigma_m$ are far}
\Return \enquote{far}
\EndIf
\ForAll {$f \in \text{Filters}$}
\State {$\text{verdict} = f(\pi, \sigma, \delta)$}
\If {$\text{verdict} \in \{\text{\enquote{close}}, \text{\enquote{far}\}}$}
\State \Return \text{verdict}
\EndIf
\EndFor
\State \Return $\Call{CompleteDecider}{\pi,\sigma, \delta}$
\EndProcedure
\end{algorithmic}
\caption{High-level code of the Fréchet decider.}
\label{alg:overview}
\end{algorithm}

The speed-ups introduced by our complete decider were already explained in Section~\ref{sec:complete_decider}. A second source for our speed-ups lies in the usage of a good set of filters. Interestingly, since our optimized complete decider via free-space exploration already solves many simple instances very efficiently, our filters have to be extremely fast to be useful -- otherwise, the additional effort for an incomplete filter does not pay off. In particular, we cannot afford expensive preprocessing and ideally, we would like to achieve sublinear running times for our filters. To this end, we only use filters that can traverse large parts of the curves quickly.
We achieve sublinear-type behavior by making previously used filters work with an adaptive step size (exploiting fast heuristic checks), and designing a new adaptive negative filter.

In the remainder of this section, we describe all the filters that we use to heuristically decide whether two curves are close or far. There are two types of filters: \emph{positive} filters check whether a curve is close to the query curve and return either \enquote{close} or \enquote{unknown}; \emph{negative} filters check if a curve is far from the query curve and return either \enquote{far} or \enquote{unknown}.

\subsection{Bounding Box Check}
This is a positive filter already described in \cite{sigspatial3}, which heuristically checks whether all pairs of points on $\pi, \sigma$ are in distance at most $\delta$. Recall that we compute the bounding box of each curve when we read it. We can thus check in constant time whether the furthest points on the bounding boxes of $\pi, \sigma$ are in distance at most $\delta$. If this is the case, then also all points of $\pi, \sigma$ have to be close to each other and thus the free-space diagram is completely free and a valid traversal trivially exists.

\begin{figure}
	\includegraphics[width=\textwidth]{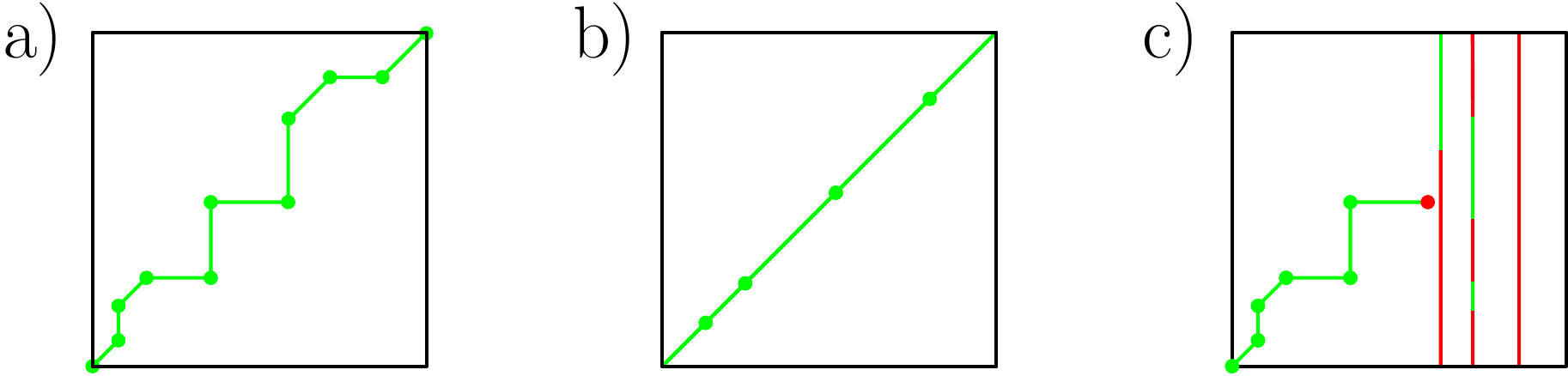}
	\caption{Sketches of the (a) greedy filter, (b) adaptive equal-time filter, and (c) negative filter. These sketches should be read as follows: the first dimension is the index on the first curve, while the second dimension is the index on the second curve. The green color means that the corresponding points are in distance at most $\delta$. Otherwise they are colored red. This visualization is similar to the free-space diagram.}
	\label{fig:filter_sketches}
\end{figure}

\subsection{Greedy}
This is a positive filter. To assert that two curves $\pi$ and $\sigma$ are close, it suffices to find a traversal $(f,g)$ satisfying $\max_{t \in [0,1]} \norm{\pi_{f(t)} - \sigma_{g(t)}} \leq \dist$. We try to construct such a traversal staying within distance $\dist$ by making greedy steps that minimize the current distance.
This may yield a valid traversal: if after at most $n+m$ steps we reach both endpoints and during the traversal the distance was always at most \dist{}, we return \enquote{near}. We can also get stuck: if a step on each of the curves would lead to a distance greater than $\dist$, we return \enquote{unknown}.
A similar filter was already used in \cite{sigspatial1}, however, here we present a variant with adaptive step size. This means that instead of just advancing to the next node in the traversal, we try to make larger steps, leveraging the heuristic checks presented in Section \ref{subsec:simple_boundaries_impl}. We adapt the step size depending on the success of the last step.
For pseudocode of the greedy filter see Algorithm \ref{alg:greedy_filter}, and for a visualization see Figure \ref{fig:filter_sketches}a.

\begin{algorithm}[t]
\begin{algorithmic}[1]
\Procedure{GreedyFilter}{$\pi, \sigma, \dist$}
\State $i, j, s \gets 1$
\While {$i < n$ or $j < m$}
	\State $S \gets
	\begin{cases}
		\{(i+1,j),(i,j+1),(i+1,j+1)\},& \text{if $s = 1$} \\
		\{(i+s,j),(i,j+s)\},& \text{if $s > 1$} \\
	\end{cases}$
	\Comment{possible steps}
	\State $P \gets \{(i',j') \in S \mid i' \leq n \ \&\  j' \leq m \ \&\  \HeurClose(\pi_{i \dots i'}, \sigma_{j \dots j'}, \dist)\}$
	\If {$P = \emptyset$}
		\If {$s=1$}
			\State \Return \enquote{unknown}
		\Else
			\State $s \gets s/2$
		\EndIf
	\Else
		\State $(i,j) \gets \argmin_{(i',j') \in P} \norm{\pi_{i'} - \sigma_{j'}}$
			\State $s \gets 2s$
	\EndIf
\EndWhile
\State \Return \enquote{close}
\EndProcedure
\end{algorithmic}
\caption{Greedy filter with adaptive step size.}
\label{alg:greedy_filter}
\end{algorithm}

\subsection{Adaptive Equal-Time}
We also consider a variation of Greedy Filter, which we call Adaptive Equal-Time Filter. The only difference to Algorithm~\ref{alg:greedy_filter} is that the allowed steps are now:
\[
	S \coloneqq
	\begin{cases}
		\{(i+1,j),(i,j+1),(i+1,j+1)\},& \text{if $s = 1$}, \\
		\left\{\left(i+s, j+ \left\lfloor \frac{m-j}{n-i} \cdot s \right\rfloor\right)\right\},& \text{if $s > 1$.} \\
	\end{cases}
\]
In contrast to Greedy Filter, this searches for a traversal that stays as close as possible to the diagonal.

\subsection{Negative}
A negative filter was already used in \cite{sigspatial1} and \cite{sigspatial3}. However, changing this filter to use an adaptive step size does not seem to be practical when used with our approach. Preliminary tests showed that this filter would dominate our running time. Therefore, we developed a new negative filter which is more suited to be used with an adaptive step size and thus can be used with our approach.

Let $(\pi_i, \sigma_j)$ be the points at which Greedy Filter got stuck. We check whether some point $\pi_{i+2^k}, k \in \mathbb{N}$, is far from all points of $\sigma$ using \HeurFar. If so, we conclude that $d_F(\pi,\sigma) > \dist$. We do the same with the roles of $\pi$ and $\sigma$ exchanged.
See Algorithm \ref{alg:negative_filter} for the pseudocode of this filter; for a visualization see Figure \ref{fig:filter_sketches}c.

\begin{algorithm}[t]
\begin{algorithmic}[1]
\Procedure{NegativeFilter}{$\pi, \sigma, \dist$}
\State $(i,j) \gets \text{ last indices of close points in greedy filter}$
\State $s \gets 1$ \label{line:start_neg}
\While {$i + s \leq n$}
	\If {$\textsc{SimpleBoundary}(\pi_{i+s}, \sigma, \dist)$ is non-free}
		\State \Return \enquote{far}
	\EndIf
	\State $s \gets 2s$
\EndWhile \label{line:end_neg}
\State
\State Repeat lines \ref{line:start_neg} to \ref{line:end_neg} with the roles of $\pi$ and $\sigma$ swapped
\State
\State \Return \enquote{unknown}
\EndProcedure
\end{algorithmic}
\caption{Negative filter, where in the two if statements we do a search with adaptive step size on $\sigma$ and $\pi$, respectively.}
\label{alg:negative_filter}
\end{algorithm}

\section{Query Data Structure} \label{sec:query}

In this section we give the details of extending the fast decider to compute the Fréchet distance in the query setting. Recall that in this setting we are given a \emph{curve dataset} $\mathcal{D}$ that we want to preprocess for the following queries: Given a polygonal curve $\pi$ (the \emph{query curve}) and a threshold distance $\delta$, report all $\sigma \in \mathcal{D}$ that are $\delta$-close to $\pi$. To be able to compare our new approach to existing work (especially the submissions of the GIS Cup) we present a query data structure here, which is influenced by the one presented in \cite{sigspatial1}.

The most important component that we need additionally to the decider to obtain an efficient query data structure is a mechanism to quickly determine a set of candidate curves on which we can then run the decider presented above. The candidate selection is done using a kd-tree on 8-dimensional points, similar to the octree used in~\cite{sigspatial1}, see \ref{sec:kdtree} for more details. The high-level structure of the algorithm for answering queries is shown in Algorithm \ref{alg:find_close_curves}.

\begin{algorithm}[t]
\begin{algorithmic}[1]
\Procedure{FindCloseCurves}{$\pi, \delta$}
\State $C \gets \Call{kdtree.query}{\pi, \delta}$
\State $R \gets \emptyset$ 
\ForAll {$\sigma \in C$}
\If {$\Call{FrechetDistanceDecider}{\pi, \sigma, \delta} = \text{\enquote{close}}$}
\State $R \gets R \cup \{\sigma\}$
\EndIf
\EndFor
\State \Return $R$
\EndProcedure
\end{algorithmic}
\caption{The function for answering a range query.}
\label{alg:find_close_curves}
\end{algorithm}


\subsection{Kd-Tree} \label{sec:kdtree}
Fetching an initial set of candidate curves via a space-partitioning data structure was already used in \cite{sigspatial1, sigspatial2, sigspatial3}. We use a kd-tree which contains 8-dimensional points, each corresponding to one of the curves in the data set. Four dimensions are used for the start and end point of the curve and the remaining four dimensions are used for the maximum/minimum coordinates in x/y direction. We can then query this kd-tree with the threshold distance \dist{} and obtain a set of candidate curves. See \cite{sigspatial1} for why proximity in the kd-tree is a necessary condition for every close curve.

\section{Implementation Details} \label{sec:implementation_details}

\paragraph{Square Root.} Computing which parts are close and which are far between a point and a segment involves intersecting a circle and a line segment, which in turn requires computing a square root.
As square roots are computationally quite expensive, we avoid them by:
\begin{itemize}
	\item filtering out simple comparisons by heuristic checks not involving square roots
	\item testing $x < a^2$ instead of $\sqrt{x} < a$ (and analogous for other comparisons)
\end{itemize}
While these changes seem trivial, they have a significant effect on the running time due to the large amount of distance computations in the implementation.

\paragraph{Recursion.} Note that the complete decider (Algorithm \ref{alg:recursive_decider}) is currently formulated as a recursive algorithm. Indeed, our implementation is also recursive, which is feasible due to the logarithmic depth of the recursion. An iterative variant that we implemented turned out to be equally fast but more complicated, thus we settled for the recursive variant.


\section{Experiments} \label{sec:experiments}
In the experiments, we aim to substantiate the following two claims. First, we want to verify that our main contribution, the decider, actually is a significant improvement over the state of the art. To this end, we compare our implementation with the -- to our knowledge -- currently fastest Fréchet distance decider, namely \cite{sigspatial1}. Second, we want to verify that our improvements in the decider setting also carry over to the query setting, also significantly improving the state of the art. To show this, we compare to the top three submissions of the \giscup.


We use three different data sets: the \giscup set (\sigspatial) \cite{sigspatial_dataset}, the handwritten characters (\characters) \cite{characters_dataset}, and the GeoLife data set (\geolife) \cite{geolife_dataset}. For all experiments, we used a laptop with an Intel i5-6440HQ processor with 4 cores and 16GB of RAM.

\paragraph{Hypotheses.} In what follows, we verify the following hypotheses:
\begin{enumerate}
	\item Our implementation is significantly faster than the fastest previously known implementation in the query and in the decider setting.
	\item Our implementation is fast on a wide range of data sets.
	\item Each of the described improvements of the decider speeds up the computation significantly.
	\item The running time of the complete decider is proportional to the number of recursive calls.
\end{enumerate}
The first two we verify by running time comparisons on different data sets. The third we verify by leaving out single pruning rules and then comparing the running time with the final implementation. Finally, we verify the fourth hypothesis by correlating the running time for different decider computations against the number of recursive calls encountered during the computation.

\subsection{Data Sets Information.}
Some properties of the data sets are shown in Table \ref{tab:datasets}. \sigspatial has the most curves, while \geolife has by far the longest. \characters is interesting as it does not stem from GPS data. By this selection of data sets, we hope to cover a sufficiently diverse set of curves.

\begin{table}
\centering
\begin{tabular}{|l|lrrr|}
\hline
data set & type & \#curves & mean hops & stddev hops \\
\hline
\sigspatial & synthetic GPS-like& 20199 & 247.8 & 154.0 \\
\characters & handwritten & 2858 & 120.9 & 21.0 \\
\geolife & GPS (multi-modal) & 16966 & 1080.4 & 1844.1 \\
\hline
\end{tabular}
\caption{Information about the data sets used in the benchmarks.}
\label{tab:datasets}
\end{table}

\paragraph{Hardware.} We used standard desktop hardware for our experiments. More specifically, we used a laptop with an Intel i5-6440HQ processor with 4 cores (2.6 to 3.1 GHz) with cache sizes 256KiB, 1MiB, and 6MiB (L1, L2, L3).

\paragraph{Code.} The implementation is written in modern C++ and only has the standard library and openMP as dependencies. The target platforms are Linux and OS X, with little work expected to adapt it to other platforms. The code was optimized for speed as well as readability (as we hope to give a reference implementation).

\subsection{Decider Setting}
In this section we test the running time performance of our new decider algorithm (Algorithm \ref{alg:overview}).
We first describe our new benchmark using the three data sets, and then discuss our experimental findings, in particular how the performance and improvement over the state of the art varies with the distance and also the \enquote{neighbor rank} in the data set.

\paragraph{Benchmark.} For the decider, we want to specifically test how the decision distance $\delta$ and how the choice of the second curve $\sigma$ influences the running time of the decider. To experimentally evaluate this, we create a benchmark for each data set $\mathcal{D}$ in the following way. We select a random curve $\pi \in \mathcal{D}$ and sort the curves in the data set $\mathcal{D}$ by their distance to $\pi$ in increasing order, obtaining the sequence $\sigma_1, \dots, \sigma_n$. For all $k \in \{1, \dots, \lfloor\log n\rfloor\}$, we
\begin{itemize}
	\item select a curve $\sigma \in \{ \sigma_{2^k}, \dots, \sigma_{2^{k+1}-1} \}$ uniformly at random\footnote{Note that for $k = \lfloor\log n\rfloor$ some curves might be undefined as possibly $2^{k+1}-1 > n$. In this case we select a curve uniformly at random from $\{\sigma_{2^k}, \dots, \sigma_n \}$.},
	\item compute the exact distance $\delta^* \coloneqq d_F(\pi, \sigma)$,
	\item for each $l \in \{-10, \dots, 0\}$, add benchmark tests $(\pi, \sigma, (1 - 2^l) \cdot \delta^*)$ and $(\pi, \sigma, (1 + 2^l) \cdot \delta^*)$.
\end{itemize}
By repeating this process for 1000 uniformly random curves $\pi \in \mathcal{D}$, we create 1000 test cases for every pair of $k$ and $l$.

\paragraph{Running Times.} First we show how our implementation performs in this benchmark. In Figure \ref{fig:runtime_decider_heatmap} we depict timings for running our implementation on the benchmark for all data sets. We can see that distances larger than the exact Fréchet distance are harder than smaller distances. This effect is most likely caused by the fact that decider instances with positive result need to find a path through the free-space diagram, while negative instances might be resolved earlier as it already becomes clear close to the lower left corner of the free-space diagram that there cannot exist such a path. Also, the performance of the decider is worse for computations on $(\pi, \sigma_i, \delta)$ when $i$ is smaller. This seems natural, as curves which are closer are more likely in the data set to actually be of similar shape, and similar shapes often lead to bottlenecks in the free-space diagram (\ie, small regions where a witness path can barely pass through), which have to be resolved in much more detail and therefore lead to a higher number of recursive calls. It follows that the benchmark instances for low $k$ and~$l$ are the hardest; this is the case for all data sets. In \characters we can also see that for $k=7$ there is suddenly a rise in the running time for certain distance factors. We assume that this comes from the fact that the previous values of $k$ all correspond to the same written character and this changes for $k=7$.

\begin{figure}
	\includegraphics[width=\textwidth]{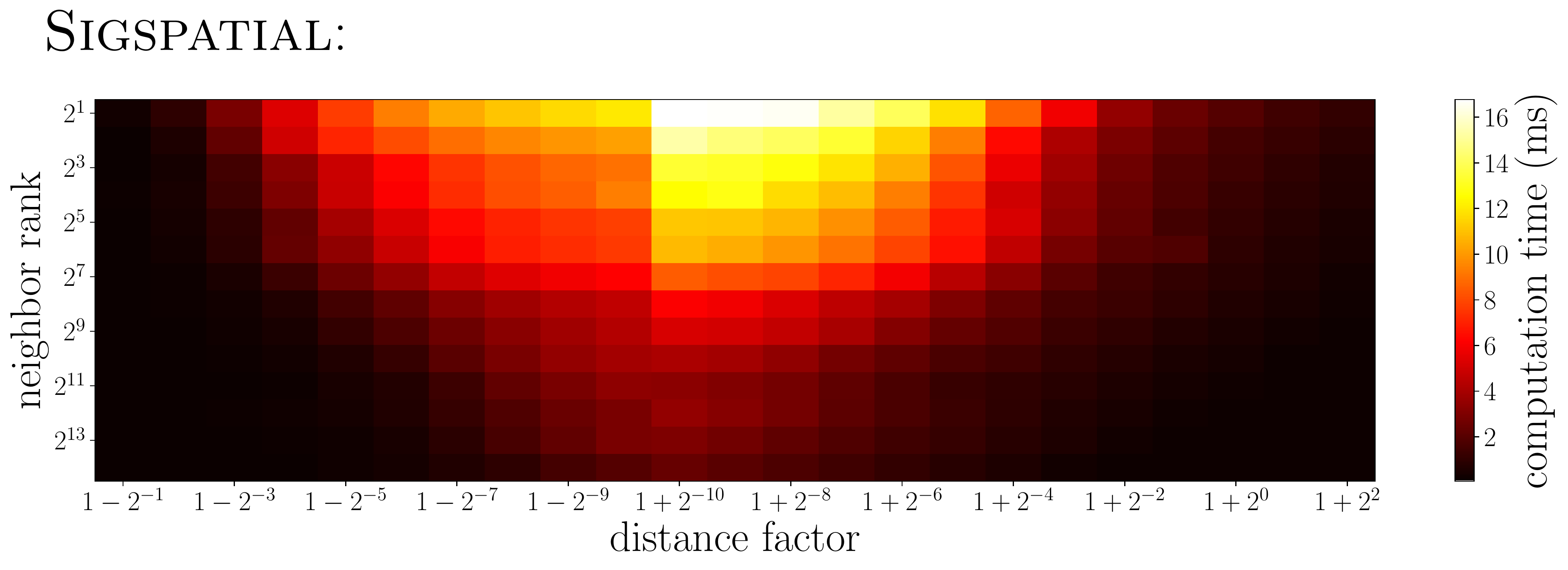}
	\includegraphics[width=\textwidth]{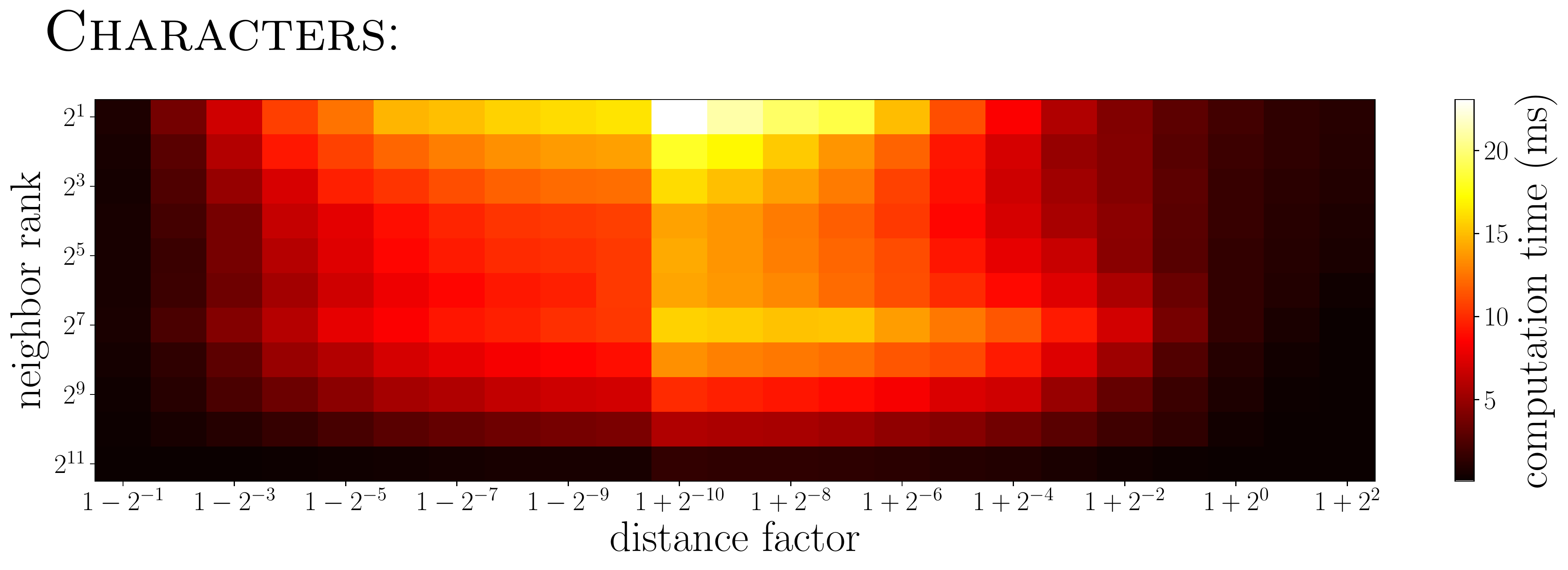}
	\includegraphics[width=\textwidth]{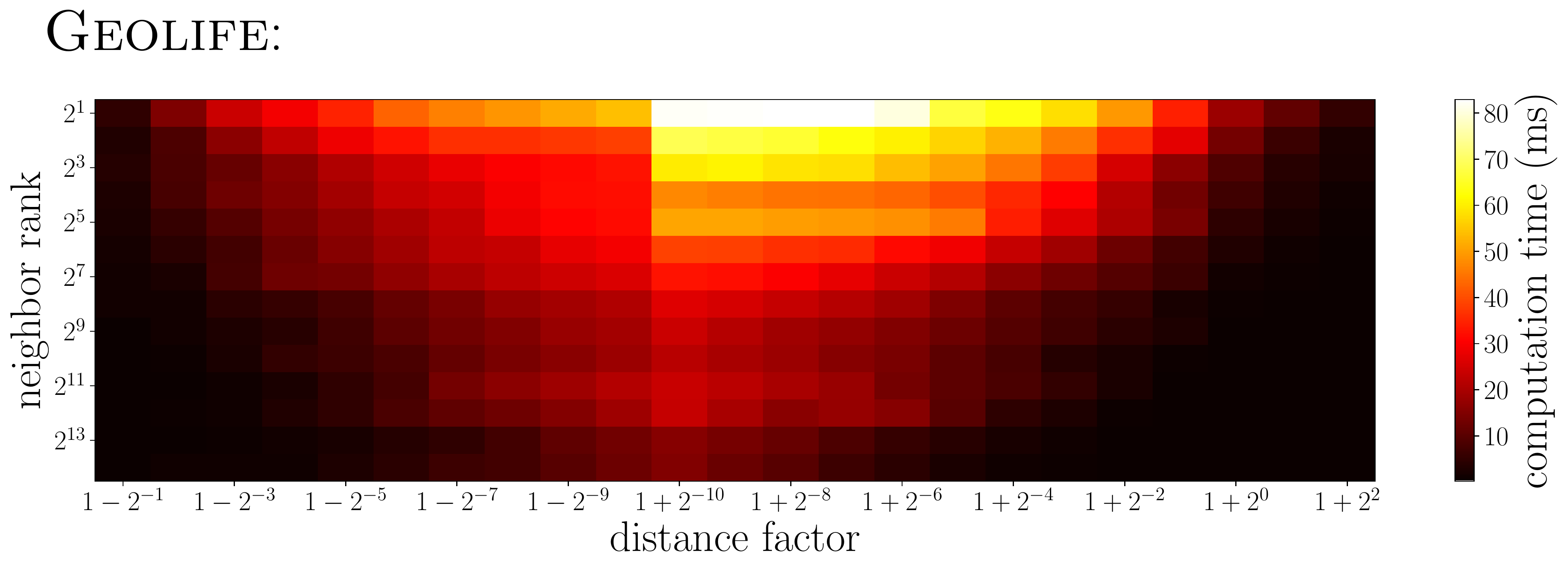}
	\caption{Running times of the decider benchmark when we run our implementation on it.}
	\label{fig:runtime_decider_heatmap}
\end{figure}

We also run the original code of the winner of the \giscup, namely \cite{sigspatial1}, on our benchmark and compare it with the running time of our implementation. See Figure \ref{fig:factor_decider_heatmap} for the speed-up factors of our implementation over the \giscup winner implementation. The speed-ups obtained depend on the data set. While for every data set a significant amount of benchmarks for different $k$ and $l$ are more than one order of magnitude faster, for \geolife even speed-ups by 2 orders of magnitude are reached. Speed-ups tend to be higher for larger distance factors. The results on \geolife suggest that for longer curves, our implementation becomes significantly faster relative to the current state of the art. Note that there also are situations where our decider shows similar performance to the one of \cite{sigspatial1}; however, those are cases where both deciders can easily recognize that the curves are far (due to, \eg, their start or end points being far). We additionally show the percentage of instances that are already decided by the filters in Figure \ref{fig:filtered_decider_heatmap}.

\begin{figure}
	\includegraphics[width=\textwidth]{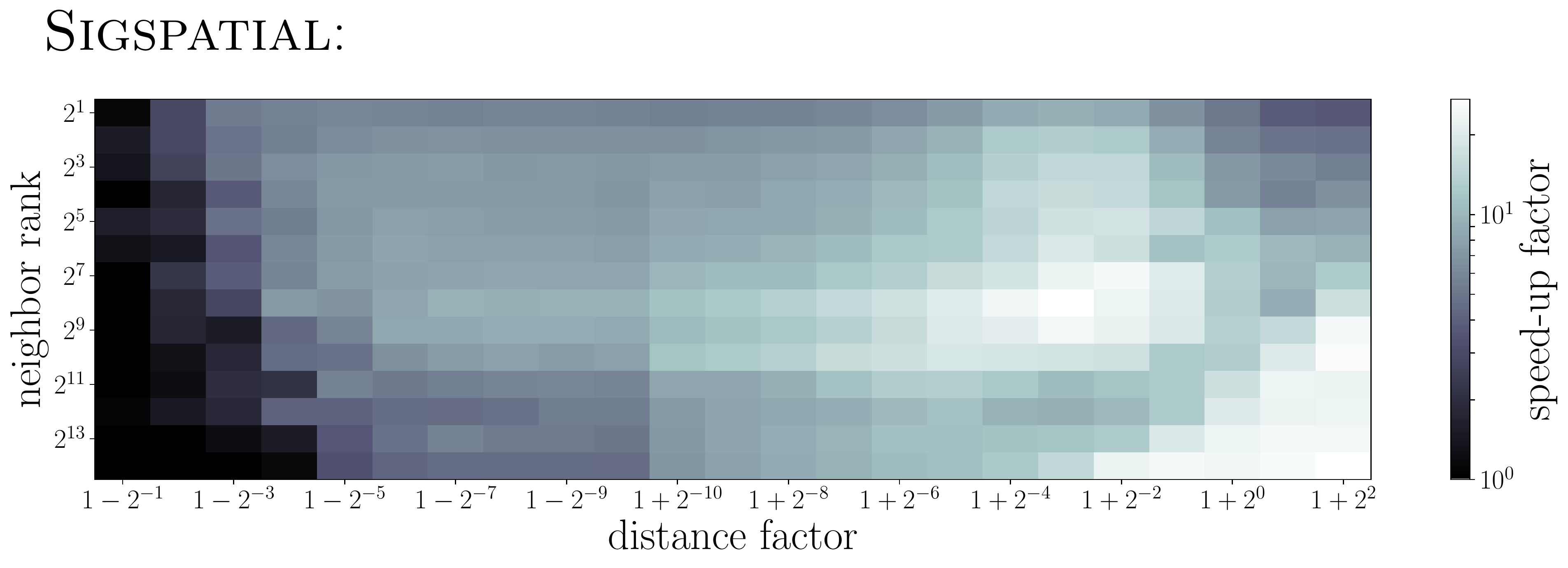}
	\includegraphics[width=\textwidth]{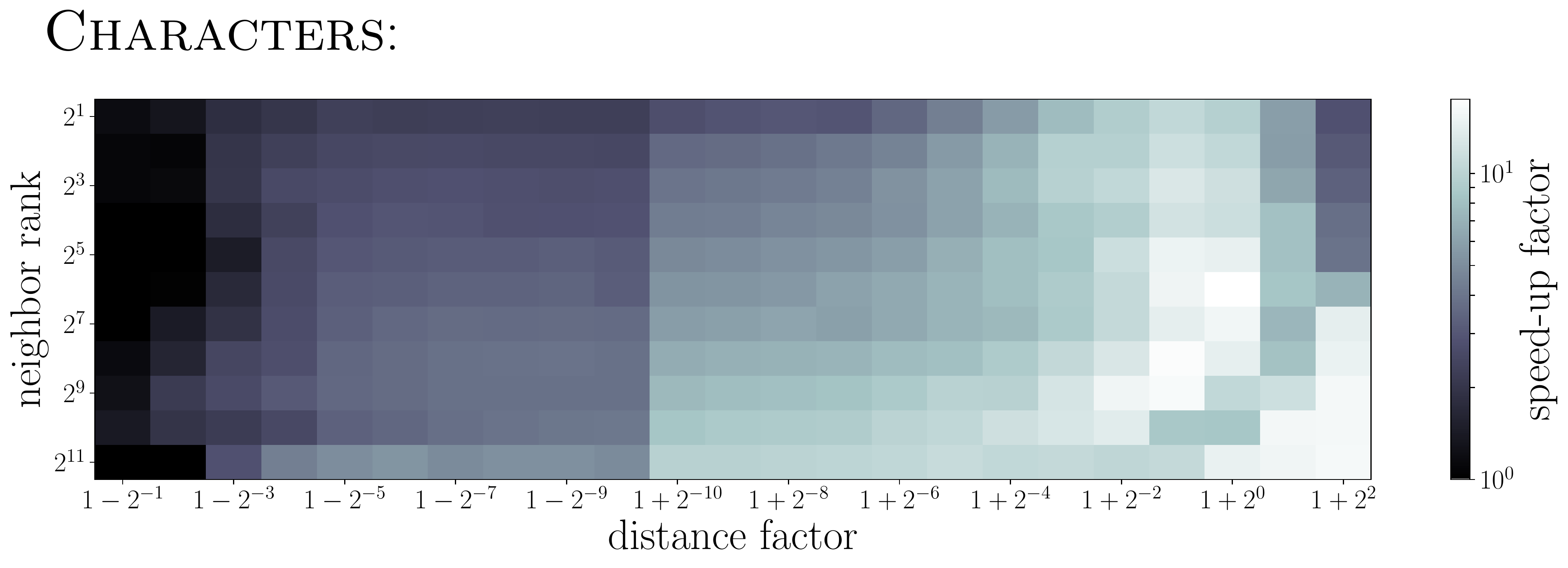}
	\includegraphics[width=\textwidth]{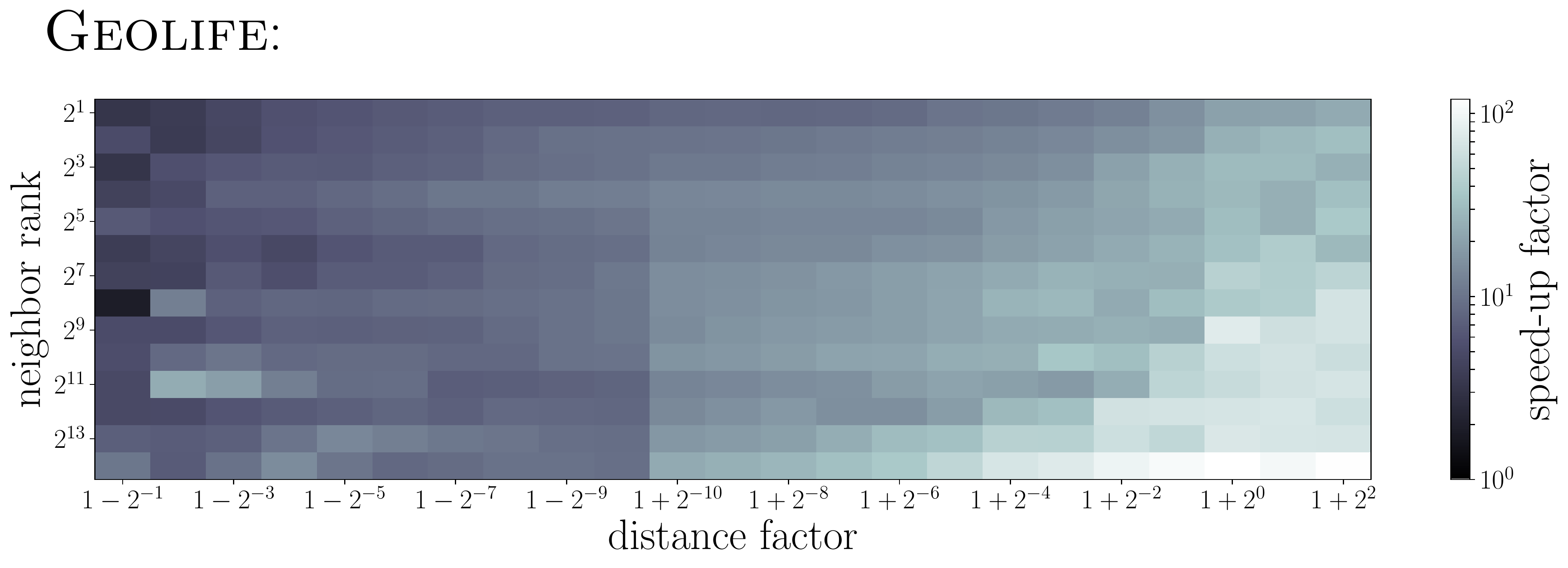}
	\caption{The speed-up factors obtained over the \giscup winner on the decider benchmark.}
	\label{fig:factor_decider_heatmap}
\end{figure}

\begin{figure}
	\includegraphics[width=\textwidth]{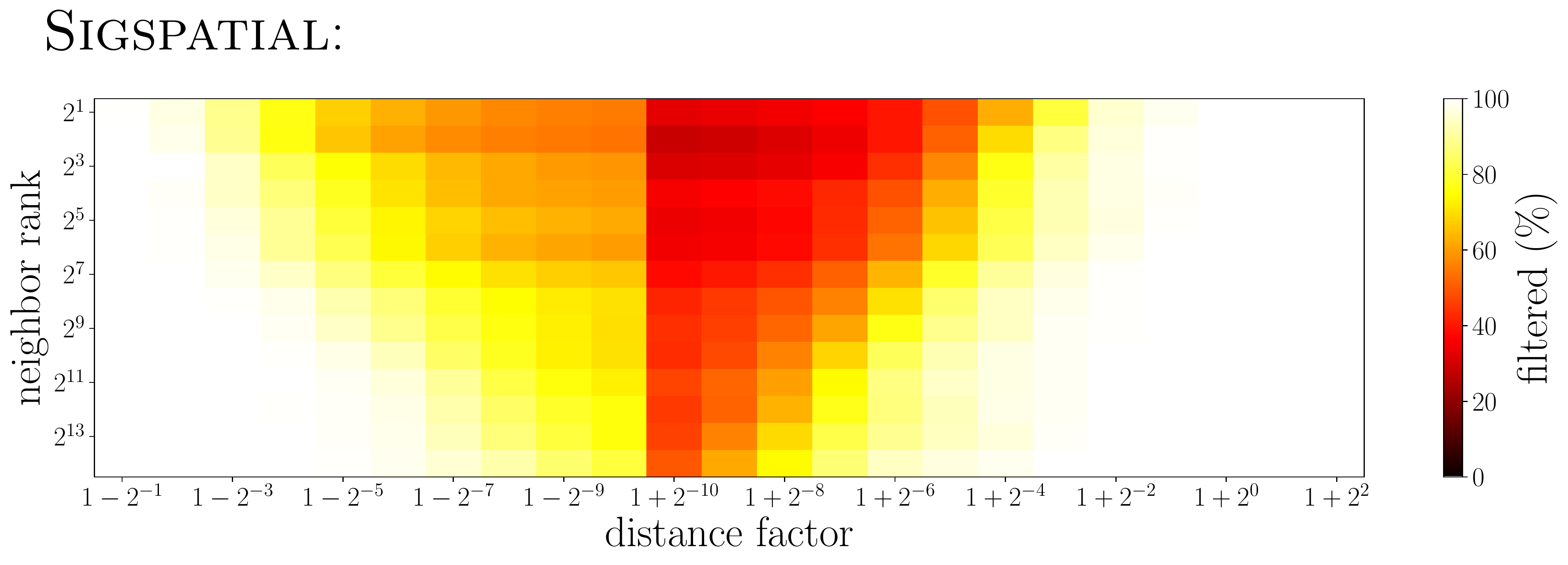}
	\includegraphics[width=\textwidth]{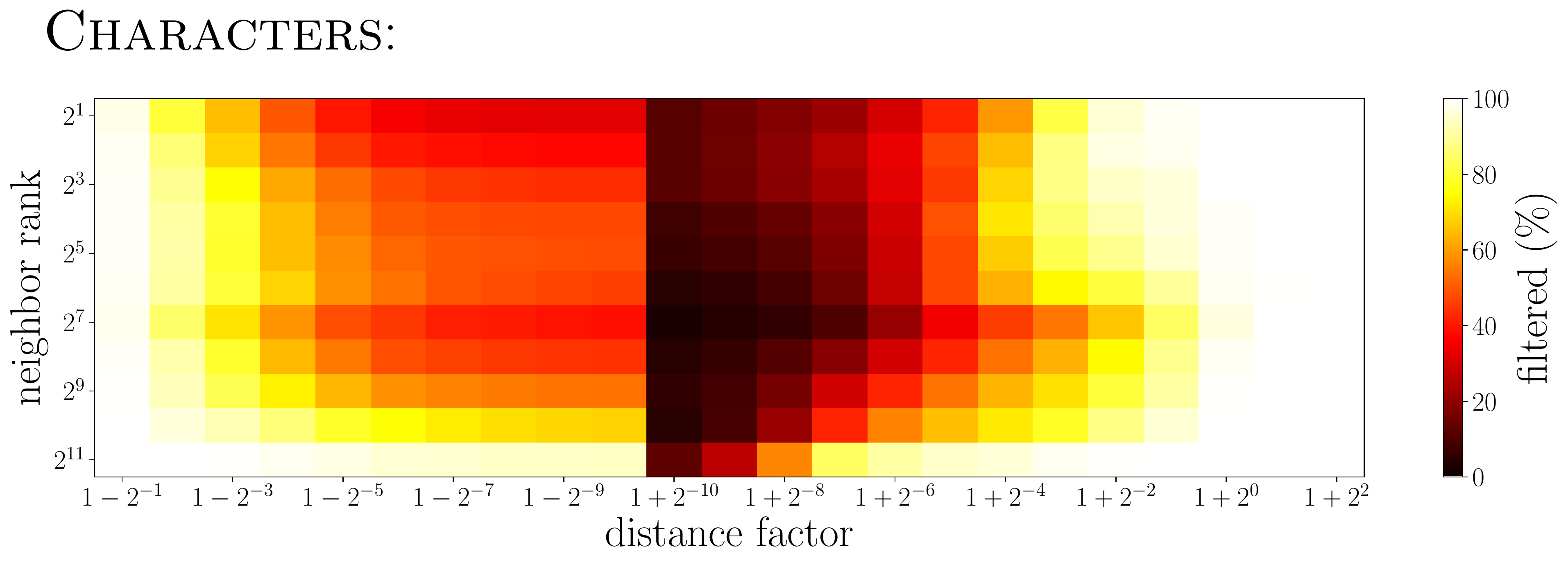}
	\includegraphics[width=\textwidth]{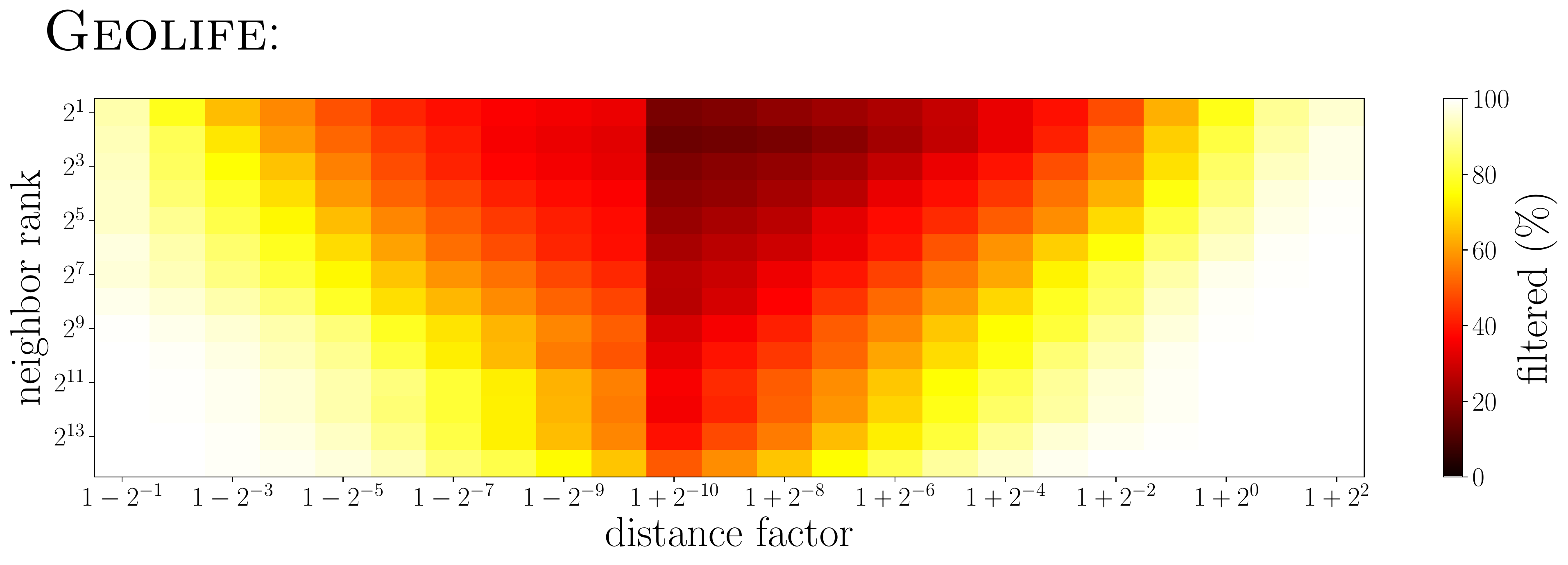}
	\caption{The percentage of queries that are decided by the filters on the decider benchmark.}
	\label{fig:filtered_decider_heatmap}
\end{figure}

\subsection{Influence of the Individual Pruning Rules}
We also verified that the improvements that we introduced indeed are all necessary. In Section \ref{sec:pruning_rules} we introduced six pruning rules. Rule \rom{1}, \ie, \enquote{Empty Inputs}, is essential. If we were to omit it, we would hardly improve over the naive free-space exploration algorithm. The remaining five rules can potentially be omitted. Thus, for each of these pruning rules, we let our implementation run on the decider benchmark with this single rule disabled; and once with all rules enabled. See Table \ref{tab:pruning_level} for the results. Clearly, all pruning rules yield significant improvements when considering the timings of the \geolife benchmark. All rules, except Rule \rom{4}, also show significant speed-ups for the other two data sets. Additionally, note that omitting Rule \rom{3}b drastically increases the running time. This effect results from Rule \rom{3}b being the main rule to prune large reachable parts, which we otherwise have to explore completely. One can clearly observe this effect in Figure \ref{fig:pruning_rules_example}.

\begin{table}
\centering
\begin{tabular}{|l|rrr|}
\hline
& \sigspatial & \characters & \geolife \\
\hline
omit none & 99.085 & 153.195 & 552.661 \\
omit Rule \rom{2} & 112.769 & 204.347 & 1382.306 \\
omit Rule \rom{3}a & 193.437 & 296.679 & 1779.810 \\
omit Rule \rom{3}b & 5317.665 & 1627.817 & 385031.421 \\
omit Rule \rom{3}c & 202.469 & 273.146 & 2049.632 \\
omit Rule \rom{4} & 110.968 & 161.142 & 696.382 \\
\hline
\end{tabular}
\caption{Times (in $ms$) for running the decider benchmarks with leaving out pruning steps. We only ran the first 100 queries for each $k$ and $l$ due to large running times when omitting the third rule.}
\label{tab:pruning_level}
\end{table}

\paragraph{Filters.} In Figure \ref{fig:filtered_decider_heatmap} we show what percentage of the queries are decided by the filters. We can see that the closer we get to the actually distance $\delta^*$ of two curves, the less likely it gets that the filters can make a decision. Furthermore, for the distances that are greater than $\delta^*$ the filters perform worse than for distances less than $\delta^*$. We additionally observe that on \characters the filters perform significantly worse than on the other two data sets. Also the running times are inversely correlated with the percentage of decisions of the filters as returning earlier in the decider naturally reduces the overall runtime.



\subsection{Query Setting}
We now turn to the experiments conducted for our query data structure, which we explained in Section \ref{sec:query}.

\paragraph{Benchmark.} We build a query benchmark similar to the one used in \cite{sigspatial1}. For each $k \in \{0, 1, 10, 100, 1000\}$, we select a random curve $\pi \in \mathcal{D}$ and then pick a threshold distance $\delta$ such that a query of the form $(\pi, \delta)$ returns exactly $k+1$ curves (note that the curve $\pi$ itself is also always returned). We repeat this 1000 times for each value of $k$ and also create such a benchmark for each of the three data sets.

\paragraph{Running Times.} We compare our implementation with the top three implementations of the \giscup on this benchmark. The results are shown in Table \ref{tab:sigspatial_comparison}. Again the running time improvement of our implementation depends on the data set. For \characters the maximal improvement factor over the second best implementation is $14.6$, for \sigspatial $17.3$, and for \geolife $29.1$. For \sigspatial and \characters it is attained at $k=1000$, while for \geolife it is reached at $k=100$ but $k=1000$ shows a very similar but slightly smaller factor.

To give deeper insights about the single parts of our decider, a detailed analysis of the running times of the single parts of the algorithm is shown in Table \ref{tab:times_parts}. Again we witness different behavior depending on the data set. It is remarkable that for \sigspatial the running time for $k=1000$ is dominated by the greedy filter. This suggests that improving the filters might still lead to a significant speed-up in this case. However, for most of the remaining cases the running time is clearly dominated by the complete decider, suggesting that our efforts of improving the state of the art focused on the right part of the algorithm.

\begin{landscape}

\begin{table}
\centering
\begin{tabular}{|l|rrrrr|rrrrr|rrrrr|}
\hline
& \multicolumn{5}{c|}{\sigspatial} & \multicolumn{5}{c|}{\characters} & \multicolumn{5}{c|}{\geolife} \\
\hline
$k$		& 0 & 1 & 10 & 100 & 1000 & 0 & 1 & 10 & 100 & 1000 & 0 & 1 & 10 & 100 & 1000 \\
\hline
\cite{sigspatial1} & 0.094 & 0.123 & 0.322 & 1.812 & 8.408 & 0.187 & 0.217 & 0.421 & 2.222 & 17.169 & 0.298 & 0.741 & 4.327 & 33.034 & 109.44 \\
\cite{sigspatial2} & 0.421 & 0.618 & 1.711 & 7.86 & 35.704 & 0.176 & 0.28 & 0.611 & 3.039 & 17.681 & 3.627 & 6.067 & 26.343 & 120.509 & 415.548 \\
\cite{sigspatial3} & 0.197 & 0.188 & 0.643 & 5.564 & 76.144 & 0.142 & 0.147 & 0.222 & 1.849 & 22.499 & 2.614 & 4.112 & 16.428 & 166.206 & 1352.19 \\
ours & 0.017 & 0.007 & 0.026 & 0.130 & 0.490 & 0.004 & 0.020 & 0.058 & 0.301 & 1.176 & 0.027 & 0.089 & 0.341 & 1.108 & 3.642 \\
\hline
\end{tabular}
\caption{Comparing the running times (in $s$) of the queries of the top three implementations of the \giscup 2017 with our new implementation on the query benchmark on all data sets (1000 queries per entry).}
\label{tab:sigspatial_comparison}
\end{table}

\begin{table}
\centering
\begin{tabular}{|l|rrrrr|rrrrr|rrrrr|}
\hline
& \multicolumn{5}{c|}{\sigspatial} & \multicolumn{5}{c|}{\characters} & \multicolumn{5}{c|}{\geolife} \\
\hline
$k$		& 0 & 1 & 10 & 100 & 1000 & 0 & 1 & 10 & 100 & 1000 & 0 & 1 & 10 & 100 & 1000 \\
\hline
spatial hashing & 0.002 & 0.003 & 0.005 & 0.017 & 0.074 & 0.002 & 0.002 & 0.004 & 0.011 & 0.032 & 0.006 & 0.009 & 0.016 & 0.032 & 0.091 \\
greedy filter & 0.004 & 0.006 & 0.024 & 0.143 & 0.903 & 0.004 & 0.010 & 0.032 & 0.153 & 0.721 & 0.009 & 0.017 & 0.060 & 0.273 & 1.410 \\
adaptive equal-time filter & 0.000 & 0.001 & 0.006 & 0.030 & 0.088 & 0.001 & 0.004 & 0.018 & 0.088 & 0.424 & 0.005 & 0.017 & 0.063 & 0.273 & 1.211 \\
negative filter & 0.001 & 0.002 & 0.010 & 0.044 & 0.107 & 0.003 & 0.012 & 0.038 & 0.152 & 0.309 & 0.008 & 0.020 & 0.069 & 0.200 & 0.606 \\
complete decider & 0.002 & 0.011 & 0.044 & 0.214 & 0.330 & 0.005 & 0.030 & 0.109 & 0.671 & 2.639 & 0.062 & 0.210 & 0.998 & 3.025 & 8.760 \\
\hline
\end{tabular}
\caption{Timings (in $s$) of the single parts of our query algorithm on the query benchmark on all three data sets. To avoid confusion, note that the sum of the times in this table do not match the entries in Table \ref{tab:sigspatial_comparison} as those are parallelized timings and additionally the timing itself introduces some overhead.}
\label{tab:times_parts}
\end{table}
\end{landscape}

\subsection{Other Experiments}

The main goal of the complete decider was to reduce the number of recursive calls that we need to consider during the computation of the free-space diagram. Due to our optimized algorithm to compute simple boundaries with adaptive step size, we expect roughly a constant (or possibly polylogarithmic) running time effort per box, essentially independent of the size of the box. To test this hypothesis, we ask whether the number of recursive calls is indeed correlated with the running time. To test this, we measured the time for each complete decider call in the query benchmark and plotted it over the number of boxes that were considered in this call. The result of this experiment is shown in Figure \ref{fig:boxes_vs_time}. We can see a practically (near-)linear correlation between the number of boxes and the running time.

\begin{figure}
	\centering
	\includegraphics[width=.7\textwidth]{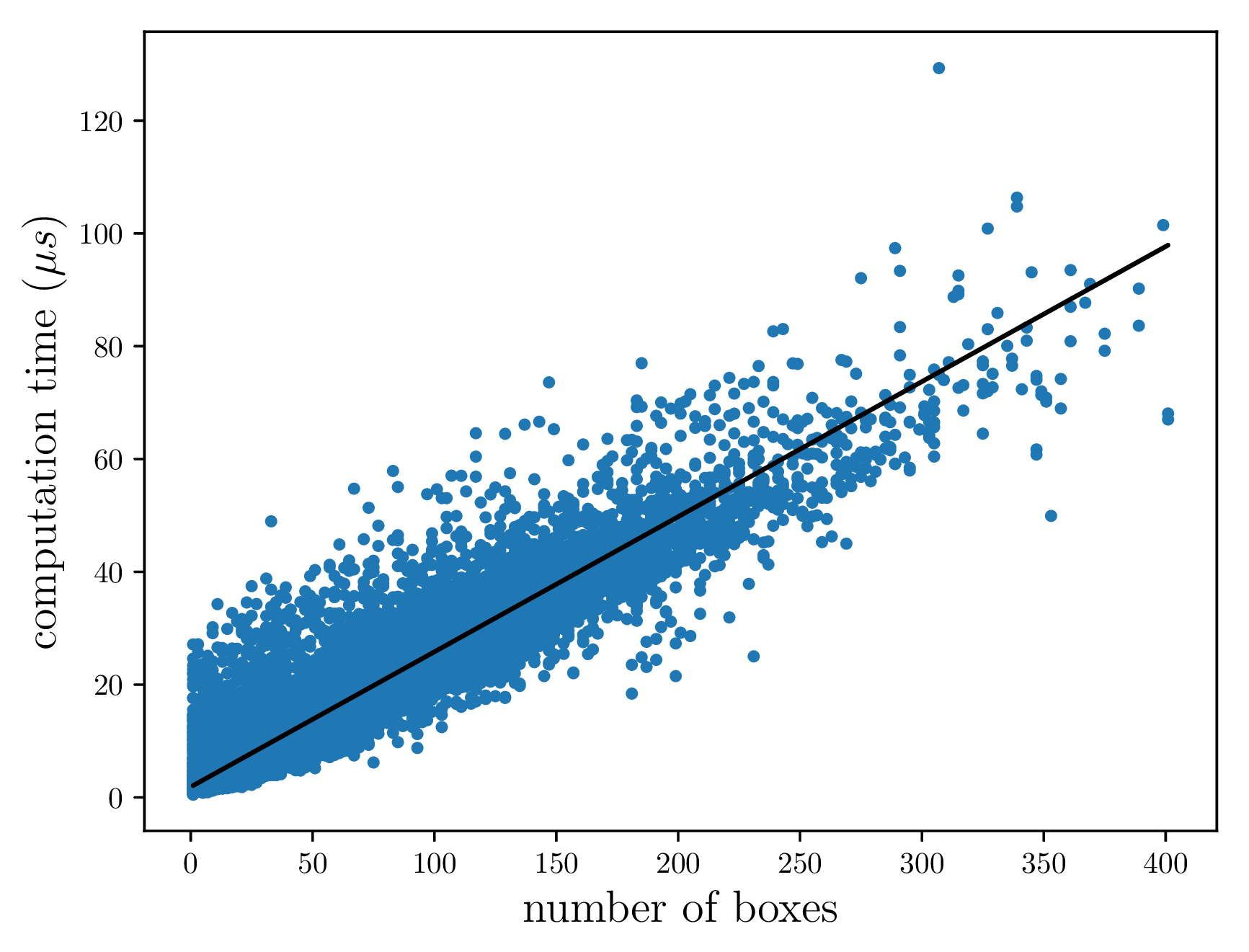}
	\caption{Shows how much time a call to the complete decider takes plotted over the number of boxes that the free-space diagram creates in total (\ie, even if a box is later split, it is still counted). The data are all exact computations (\ie, those where neither kd-tree nor filter decided) issued for the \sigspatial query benchmark. The black line is the linear regression ($r^2 = 0.91$).}
	\label{fig:boxes_vs_time}
\end{figure}

\newcommand{\lowerRight}{\mathrm{lowerRight}}
\newcommand{\upperLeft}{\mathrm{upperLeft}}

\section{Certificates}
\label{sec:certificates}

Whenever we replace a naive implementation in favor of a fast, optimized, but typically more complex implementation, it is almost unavoidable to introduce bugs to the code. As a useful countermeasure the concept of \emph{certifying algorithms} has been introduced; we refer to~\cite{McConnellMNS11} for a survey. In a nutshell, we aim for an implementation that outputs, apart from the desired result, also a proof of correctness of the result. Its essential property is that the certificate should be \emph{simple} to check (i.e., much simpler than solving the original problem). In this way, the certificate gives any user of the implementation a simple means to check the output for any conceivable instance.

Following this philosophy, we have made our implementation of the Fréchet decider \emph{certifying}: for any input curves $\pi, \sigma$ and query distance $\delta$, we are able to return, apart from the output whether the Fréchet distance of $\pi$ and $\sigma$ is at most $\delta$, also a certificate $c$. On our realistic benchmarks, constructing this certificate slows down the Fréchet decider by roughly 50\%. The certificate $c$ can be checked by a simple verification procedure consisting of roughly 200 lines of code.

In Sections~\ref{sec:yescertificates} and~\ref{sec:nocertificates}, we define our notion of YES and NO certificates, prove that they indeed certify YES and NO instances and discuss how our implementation finds them. In Section~\ref{sec:certificatechecker}, we describe the simple checking procedure for our certificates. Finally, we conclude with an experimental evaluation in Section~\ref{sec:certificateExperiments}.

\subsection{Certificate for YES Instances}
\label{sec:yescertificates}
To verify that $\dF(\pi, \sigma) \le \delta$, by definition it suffices to give a feasible traversal, i.e., monotone and continuous functions $f: [0,1] \to [1,n]$ and $g: [0,1] \to [1,m]$ such that for all $t\in [0,1]$, we have $(\pi_{f(t)}, \sigma_{g(t)}) \in F$, where $F = \{(p,q) \in [1,n]\times[1,m] \mid \|\pi_p - \sigma_q\| \le \delta\}$ denotes the free-space (see Section~\ref{subsec:freespace_diagram}). We slightly simplify this condition by discretizing $(f(t), g(t))_{t\in [0,1]}$, as follows. 

\begin{definition}\label{def:yescert}
We call $T = (t_1, \dots, t_\ell)$ with $t_i \in [1,n]\times[1,m]$ a YES certificate if it satisfies the following conditions: (See also Figure~\ref{fig:YEScertificate} for an example.)

\begin{enumerate}
\item (start) $t_1 = (1, 1) \in F$,
\item (end) $t_\ell = (n,m) \in F$,
\item (step) For any $t_k = (p,q)$ and $t_{k+1} = (p',q')$, we have either
\begin{enumerate}
\item $p'=p$ and $q' > q$: In this case, we require that $(p, \bar{q})\in F$ for all $\bar{q} \in \{q, \lceil q \rceil, \dots, \lfloor q'\rfloor, q'\}$,\label{enum:vertical} 
\item $q'=q$ and $p' > p$: In this case, we require that $(\bar{p}, q)\in F$ for all $\bar{p} \in \{p, \lceil p \rceil, \dots, \lfloor p'\rfloor, p'\}$,\label{enum:horizontal}
\item $i \le p < p' \le i+1$, $j \le q < q' \le j+1$ for some $i\in \{1,\dots, n\}, j \in \{1, \dots, m\}$: In this case, we require that $(p, q), (p', q') \in F$. \label{enum:diagonal}
\end{enumerate}
\end{enumerate}
\end{definition}

It is straightforward to show that a YES certificate $T$ proves correctness for YES instances as follows. 
\begin{proposition}
Any YES certificate $T = (t_1, \dots, t_\ell)$ with $t_i \in [1,n]\times[1,m]$ proves that $\dF(\pi, \sigma) \le \delta$.
\end{proposition}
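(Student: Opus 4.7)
The plan is to construct from any YES certificate $T = (t_1, \dots, t_\ell)$ an explicit feasible traversal $(f,g) \in \mathcal{T}_n \times \mathcal{T}_m$ which, by definition of the Fréchet distance, directly yields $\dF(\pi,\sigma) \le \delta$. First I would define a piecewise-linear continuous path $P \colon [0,1] \to [1,n]\times[1,m]$ by subdividing $[0,1]$ into $\ell-1$ equal subintervals and linearly interpolating between $t_k$ and $t_{k+1}$ on the $k$-th subinterval; the projections $f,g$ of $P$ onto the first and second coordinates are then the candidate traversal. By the (start) and (end) conditions we have $P(0) = (1,1)$ and $P(1) = (n,m)$, so $f,g$ have the required ranges, and continuity is immediate from the piecewise-linear construction. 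Monotonicity of $f$ and $g$ follows by inspecting each case of the (step) condition: cases~(\ref{enum:vertical}) and~(\ref{enum:horizontal}) hold one coordinate fixed while strictly increasing the other, and case~(\ref{enum:diagonal}) strictly increases both coordinates, so both $f$ and $g$ are non-decreasing throughout.

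The heart of the proof is verifying the free-space condition $(f(t), g(t)) \in F$ for all $t \in [0,1]$, i.e., that every linear segment from $t_k$ to $t_{k+1}$ lies entirely in $F$. For this I would invoke the classical fact that $F \cap C_{i,j}$ is convex for every cell $C_{i,j}$, as noted in Section~\ref{subsec:freespace_diagram}. In case~(\ref{enum:diagonal}) the entire segment lies inside the single cell $C_{i,j}$ and both endpoints belong to $F$ by hypothesis, so convexity of $F \cap C_{i,j}$ immediately gives the whole segment. In case~(\ref{enum:vertical}) the vertical segment $\{p\}\times[q,q']$ may cross many cells, so I would split it at the intermediate integer heights $\lceil q\rceil, \dots, \lfloor q'\rfloor$: each resulting subsegment lies within a single cell (or on a shared cell boundary) whose intersection with $F$ is convex, and the hypothesis of case~(\ref{enum:vertical}) provides both endpoints of every such subsegment in $F$. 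Case~(\ref{enum:horizontal}) is handled symmetrically.

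The main obstacle is a minor bookkeeping issue at cell boundaries, where a point may belong to several cells simultaneously and a subsegment may lie entirely on a shared edge of two adjacent cells. Since $F \cap C_{i,j}$ is closed and convex for each cell, however, no inconsistency arises: any such boundary point is covered by the convex free-space of either adjacent cell, and the per-segment inclusions agree at the shared $t_k$. Chaining the segment arguments yields $P(t) \in F$ for all $t \in [0,1]$, so $(f,g)$ is a feasible traversal and $\dF(\pi,\sigma) \le \delta$ follows by definition.
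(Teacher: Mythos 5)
Your proposal is correct and follows essentially the same argument as the paper: interpret $T$ as a polygonal path, project to obtain the traversal $(f,g)$, and verify the free-space condition by decomposing each certificate segment into per-cell subsegments and invoking convexity of $F$ within each cell. The added care you take with the explicit piecewise-linear parameterization and the cell-boundary bookkeeping is fine but does not change the substance of the argument.
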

\begin{proof}
View $T$ as a polygonal curve in $[1,n]\times[1,m]$ and let $\tau : [0,1] \to [1,n]\times[1,m]$ be a reparameterization of $T$. Let $f,g$ be the projection of $\tau$ to the first and second coordinate, respectively. Note that by the assumption on $T$, $f$ and $g$ are monotone and satisfy $(f(0),g(0)) = (1,1)$ and $(f(1), g(1)) = (n,m)$. We claim that $(f(t),g(t)) \in F$ for all $t\in [0,1]$, which thus yields $\dF(\pi, \sigma) \le \delta$ by definition. 

To see the claim, we recall that for any \emph{cell} $[i,i+1]\times[j,j+1]$, the free-space restricted to this cell, i.e., $F\cap [i,i+1]\times [j,j+1]$, is convex (as it is the intersection of an ellipse with $[i,i+1] \times [j,j+1]$, see~\cite{alt_95}). Observe that for any segment from $t_k=(p,q)$ to $t_{k+1}=(p',q')$, we (implicitly) decompose it into subsegments contained in single cells (e.g., for $p'=p$ and $q' > q$, the segment from $(p,q)$ to $(p,q')$ is decomposed into the segments connecting the sequence $(p,q), (\lceil p \rceil, q), \dots, (\lfloor p' \rfloor, q), (p'q')$. As each such subsegment is contained in a single cell, by convexity we see that the whole subsegment is contained in $F$ if the corresponding endpoints of the subsegment are in $F$. This concludes the proof.
\end{proof}

It is not hard to prove that for YES instances, such a certificate always exists (in fact, there always is a certificate of length $O(n+m)$). Furthermore, for each YES instance in our benchmark set, our implementation indeed finds and returns a YES instance, in a way we describe next.

\begin{figure}
\centering
	\includegraphics[width=0.49\textwidth]{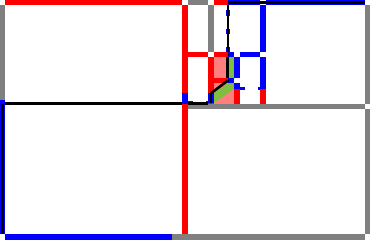}
	\includegraphics[width=0.48\textwidth]{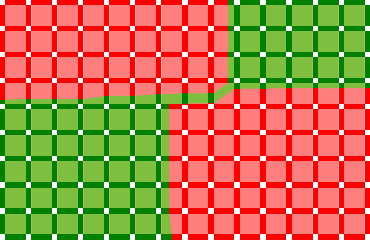}
	\caption{Example of a YES instance and its certificate. The right picture shows the free-space of the instance. The left picture illustrates the parts of the free-space explored by our algorithm and indicates the computed YES certificate by black lines.}
	\label{fig:YEScertificate}
\end{figure}

\paragraph{Certifying positive filters.}
It is straightforward to construct YES certificates for instances that are resolved by our positive filters (Bounding Box Check, Greedy and Adaptive Equal-Time): All of these filters implicitly construct a feasible traversal. In particular, for any instance for which the Bounding Box Check applies (which shows that any pair of points of $\pi$ and $\sigma$ are within distance $\delta$), already the sequence $((1,1), (n,1), (n,m))$ yields a YES certificate.
																					  
For Greedy, note that the sequence of positions $(i,j)$ visited in Algorithm~\ref{alg:greedy_filter} yields a YES certificate: Indeed, any step from $(i,j)$ is either a vertical step to $(i, j+s)$ (corresponding to case \ref{enum:vertical}),  a horizontal step to $(i+s, j)$ (corresponding to case \ref{enum:horizontal}), or a diagonal step \emph{within a cell} to $(i+1, j+1)$ (corresponding to Case~\ref{enum:diagonal} of Definition~\ref{def:yescert}). Furthermore, such a step is only performed if it stays within the free-space.

Finally, for Adaptive Equal-Time, we also record the sequence of positions $(i,j)$ visited in Algorithm~\ref{alg:greedy_filter} (recall that here, we change the set of possible steps for $s > 1$ to $S=\{(i+s, j+ s')\}$ with $s'=\lfloor \frac{m-j}{n-i} \cdot s \rfloor$) -- with the only difference that we need to replace any step from $(i,j)$  to $(i+s,j+s')$ by the sequence $(i,j), (i+s,j), (i+s,j+s')$. Note that this sequence satisfies Condition (step) of Definition~\ref{def:yescert}, as Adaptive Equal-Time only performs this step if it can verify that \emph{all pairwise distances} between $\pi_{i\dots i+s}$ and $\sigma_{j\dots j+s'}$ are bounded by $\delta$.

\paragraph{Certifying YES instances in the complete decider.}
Recall that the complete decider via free-space exploration decides an instance by recursively determining, given the inputs $B_l^R, B_b^R$ of a box $B$, the corresponding outputs $B_r^R, B_t^R$. In particular, YES instances are those with $(n,m) \in B_t^R$ (or equivalently $(n,m) \in B_r^R$) for the box $B=[1,n]\times [1,m]$. To certify such instances, we memorize for each point in $B_r^R$  and $B_t^R$ a predecessor of a feasible traversal from $(1,1)$ to this point. Note that here, it suffices to memorize such a predecessor only for the \emph{first}, i.e., lowest or leftmost, \emph{point of each interval} in $B_r^R$ and $B_t^R$ (as any point in this interval can be reached by traversing to the first point of the interval and then along this reachable interval to the destination point). This gives rise to a straightforward recursive approach to determine a feasible traversal. 

In the complete decider, whenever we determine some output $B_t^R$, it is because of one of the following reasons: (1) one of of our pruning rules is successful, (2) the box $B$ is on the cell-level, or (3) we determine $B_t^R$ as the union of the outputs $(B_1)_t^R$, $(B_2)_t^R$ of the boxes $B_1,B_2$ obtained by splitting $B$ vertically. Note that we only need to consider the case in which $B_t^R$ is determined as non-empty (otherwise nothing needs to be memorized). Let us consider each case separately. 

If reason (1) determines a non-empty $B_t^R$, then this happens either by Rule \rom{3}b or by Rule \rom{3}c. Note that in both cases, $B_t^R$ consists of a single interval. If Rule \rom{3}b applies, then the last, i.e., topmost, point on $B_l$ is reachable and proves that the \emph{free prefix} of $B^R_t$ is reachable. Thus, we store \emph{the last interval of $B_l^R$} as the responsible interval for the (single) interval in $B^R_t$. Similarly, if Rule \rom{3}c applies, then consider the first, i.e., leftmost, point $(i, j_{\max})$ on $B_t$. Since the rule applies, the opposite point $(i,j_{\min})$ on $B_l$ must be reachable and the path $\{i\} \times [j_{\min},j_{\max}]$ must be free. Thus, we can store \emph{the interval of $B^R_b$ containing $(i,j_{\min})$} as the responsible interval for the (single) interval in $B^R_t$.  

If reason (2) determines a non-empty $B_t^R$, then we are on a cell-level. In this case, either $B_l^R$ or $B_b^R$ is a non-empty interval, and we can store such an interval as the responsible interval for the (single) interval in $B_t^R$. Finally, if reason (3) determines a non-empty $B_t^R$, then we simply keep track of the responsible interval for each interval in $(B_1)_t^R$ and $(B_2)_t^R$ (to be precise, if the last interval of $(B_1)^R_t$ and the first interval of $(B_2)^R_t$ overlap by the boundary point, we merge the two corresponding intervals and only keep track of the responsible interval of the last interval of $(B_1)^R_t$ and can safely forget about the responsible interval of the first interval of $(B_2)^R_t$.

Note that we proceed analogously for outputs $B_r^R$. Furthermore, the required memorization overhead is very limited.

It is straightforward to use the memorized information to compute a YES certificate recursively: Specifically, to compute a YES certificate reaching some point $x$ on an output interval $I$, we perform the following steps. Let $J$  be the responsible interval of $I$. We recursively determine a YES certificate reaching the first point $J$. Then we append a point to the certificate to traverse to the point of $J$ from which we can reach the first point of $I$ (this point is easily determined by distinguishing whether we are on the cell-level, and whether $J$ is opposite to $I$ or intersects $I$ in a corner point). We append the first point of $I$ to the certificate, and finally append the point $x$ to the certificate.\footnote{To be precise, we only append a point if it is different from the last point of the current certificate.} By construction, the corresponding traversal never leaves the free-space. Using this procedure, we can compute a YES certificate by computing a YES certificate reaching $(n,m)$ on the last interval of $B_t^R$ for the initial box $B=[1,n]\times[1,m]$.

\subsection{Certificate for NO Instances}
\label{sec:nocertificates}

We say that a point $(p,q)$ lies on the bottom boundary if $q = 1$, on the right boundary if $p = n$, on the top boundary if $q = m$, and on the left boundary if $p=1$. Likewise, we say that a point $(p',q')$ lies to the lower right of a point $(p, q)$, if $p \le p'$ and $q \ge q'$. 

\begin{definition}\label{def:nocert}
We call $T = (t_1, \dots, t_\ell)$ with $t_i \in [1,n]\times[1,m]$ a NO certificate if it satisfies the following conditions: (See also Figure~\ref{fig:NOcertificate} for an example.)
\begin{enumerate}
\item (start) $t_1$ lies on the right or bottom boundary and $t_1 \notin F$,
\item (end) $t_\ell$ lies on the left or upper boundary and $t_\ell \notin F$,
\item (step) For any $t_k = (p,q)$ and $t_{k+1} = (p',q')$, we have either
\begin{enumerate}
\item $p'=p$ and $q' > q$: In this case, for any neighboring elements $\bar{q}_1, \bar{q}_2$ in $q, \lceil q \rceil, \dots, \lfloor q' \rfloor , q'$, we require that $(\{p\}\times [\bar{q}_1, \bar{q}_2]) \cap F = \emptyset$,\label{enum:verticalNO} 
\item $q'=q$ and $p' < p$: In this case, for any neighboring elements $\bar{p}_1, \bar{p}_2$ in $p', \lceil p' \rceil, \dots, \lfloor p \rfloor , p$, we require that $([\bar{p}_1, \bar{p}_2]\times \{q\}) \cap F = \emptyset$,\label{enum:horizontalNO} 
\item $t_{k+1}$ lies to the lower right of $t_k$, i.e., $p \le p'$ and $q \ge q'$.  \label{enum:monotonicity}
\end{enumerate}
\end{enumerate}
\end{definition}

\begin{figure}
	\includegraphics[width=0.49\textwidth]{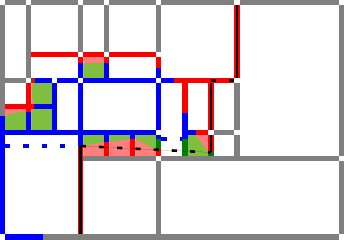}
	\includegraphics[width=0.49\textwidth]{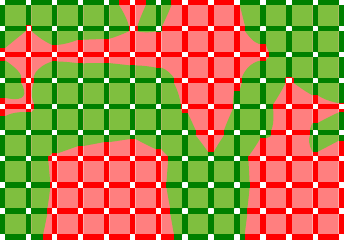}
	\caption{Example of a NO instance and its certificate. The right picture shows the free-space of the instance. The left picture illustrates the parts of the free-space explored by our algorithm and indicates the computed NO certificate by black lines.}
	\label{fig:NOcertificate}
\end{figure}

We prove that a NO certificate $T$ proofs correctness for NO instances as follows.

\begin{proposition}
Any NO certificate $T = (t_1, \dots, t_\ell)$ with $t_i \in [1,n]\times[1,m]$ proves that $\dF(\pi, \sigma) > \delta$.
\end{proposition}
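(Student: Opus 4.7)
The plan is a proof by contradiction leveraging the start, end, and step conditions of the NO certificate. Assume $\dF(\pi, \sigma) \leq \delta$, so there exists a feasible monotone continuous traversal $\tau: [0,1] \to [1,n] \times [1,m]$ with $\tau(0) = (1,1)$, $\tau(1) = (n,m)$, and $\tau(t) \in F$ for all $t$. Writing $\tau(t) = (\tau_1(t), \tau_2(t))$, I would define, for each $p \in [1,n]$, $\phi_{\min}(p) \coloneqq \min\{\tau_2(t) : \tau_1(t) = p\}$ and $\phi_{\max}(p) \coloneqq \max\{\tau_2(t) : \tau_1(t) = p\}$, which by monotonicity of $\tau$ are non-decreasing and satisfy $\phi_{\min}(p) \leq \phi_{\max}(p)$.

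Next, I would classify each $t_k = (p_k, q_k)$ as \emph{below} $\tau$ if $\phi_{\min}(p_k) > q_k$, \emph{above} $\tau$ if $\phi_{\max}(p_k) < q_k$, and \emph{on} $\tau$ otherwise (in which case $t_k \in \tau$ and hence $t_k \in F$). Using the start condition together with $t_1 \notin F$, one shows $t_1$ is strictly below: being on $\tau$ would imply $t_1 \in F$, and being above is ruled out by $\phi_{\max}(n) \geq m \geq q_1$ (for $p_1 = n$) or $\phi_{\max}(p_1) \geq 1 = q_1$ (for $q_1 = 1$). A symmetric argument using the end condition shows $t_\ell$ is strictly above $\tau$.

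The core inductive claim is: if $t_k$ is below $\tau$, then either $t_{k+1}$ is below $\tau$ as well, or we reach a contradiction. For a diagonal step with $p_k \leq p_{k+1}$ and $q_k \geq q_{k+1}$ (case~\ref{enum:monotonicity}), monotonicity of $\phi_{\min}$ gives $\phi_{\min}(p_{k+1}) \geq \phi_{\min}(p_k) > q_k \geq q_{k+1}$, so $t_{k+1}$ remains below. For a vertical step from $(p,q)$ to $(p,q')$ (case~\ref{enum:verticalNO}), if $t_{k+1}$ is not below then $\phi_{\min}(p) \leq q'$, which places $(p, \phi_{\min}(p)) \in \tau \cap F$ on the segment $\{p\}\times[q,q']$; but condition~\ref{enum:verticalNO} forces this segment to be disjoint from $F$, a contradiction. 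For a horizontal step from $(p,q)$ to $(p',q)$ with $p' < p$ (case~\ref{enum:horizontalNO}), if $t_{k+1}$ is not below then $\phi_{\min}(p') \leq q$; combined with $\phi_{\min}(p) > q$ and continuity, an intermediate-value argument yields a first coordinate $p'' \in [p', p]$ at which $\tau$ attains second coordinate $q$, putting $(p'', q) \in \tau \cap F$ on $[p',p]\times\{q\}$ and again contradicting condition~\ref{enum:horizontalNO}. Induction from $t_1$ below then forces $t_\ell$ to be below $\tau$, contradicting that $t_\ell$ is strictly above.

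The main obstacle will be cleanly handling the ``on $\tau$'' case and degeneracies where $\tau$ contains a non-trivial vertical or horizontal section aligned with some $t_k$. These are resolved by the observation that whenever $t_k$ is adjacent to a vertical or horizontal step, the corresponding sub-segment condition of Definition~\ref{def:nocert} forces $t_k \notin F$, hence $t_k \notin \tau$; moreover, the diagonal preservation argument actually strengthens to rule out transitions from ``below'' directly into ``on'' as well (since $\phi_{\min}(p_{k+1}) > q_{k+1}$ is strict). With these observations, the trichotomy is maintained throughout the induction, and the argument closes cleanly.
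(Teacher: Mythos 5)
Your proof is correct and takes essentially the same approach as the paper's: the invariant that each $t_k$ stays strictly below a hypothetical feasible traversal is exactly the paper's invariant that no feasible traversal visits a point to the lower right of $t_k$, and the inductive case analysis (monotone steps preserve the invariant for free; vertical/horizontal steps use the non-free segment to rule out a crossing) is identical. Your $\phi_{\min},\phi_{\max}$ machinery, the below/above/on trichotomy, and the intermediate-value argument in the horizontal case just make explicit what the paper leaves as \enquote{we argue symmetrically}.
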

\begin{proof}
We inductively prove that no feasible traversal from $(1,1)$ to $(n,m)$ can visit any point to the lower right of $t_i$, for all $1\le i \le \ell$. As an immediate consequence, $\dF(\pi, \sigma) > \delta$, since $t_\ell$ lies on the left or upper boundary and thus \emph{any} feasible traversal must visit a point to the lower right of $t_\ell$ -- hence, such a traversal cannot exists.

As base case, note that $t_1$ lies on the right or bottom boundary and is not contained in the free-space. Thus, by monotonicity, no feasible traversal can visit any point to the lower right of $t_1$. Thus, assume that the claim is true for $t_i = (p,q)$ and consider the next point $t_{i+1} = (p', q')$ in the sequence. If $t_{i+1}$ lies to the lower right of $t_{i}$, the claim is trivially fulfilled for $t_{i+1}$ by monotonicity. If, however, $p' = p$ and $q' > q$, then Condition~\ref{enum:verticalNO} of Definition~\ref{def:nocert} is equivalent to $({p}\times [q, q']) \cap F = \emptyset$.  Note that any feasible traversal visiting a point to the lower right of $t_{i+1}$ must either visit a point to the lower right of $t_{i}$  -- which is not possible by inductive assumption -- or must cross the path $\{p\}\times [q,q']$ -- which is not possible as $(\{p\}\times [q,q'])\cap F = \emptyset$. We argue symmetrically for the case that $q' = q$ and $p' < p$. This concludes the proof.
\end{proof}

Note that our definition of NO certificate essentially coincides with the definition of a cut of positive width in~\cite{BuchinOS19}. For NO instances, such a NO certificate always exists (in contrast to YES certificates, the shortest such certificate is of length $\Theta(n^2)$ in the worst case). For all NO instances in our benchmark sets, our implementation manages to find and return such a NO certificate, in a way we describe next. 

\paragraph{Certifying the negative filter.}
It is straightforward to compute a NO certificate for instances resolved by our negative filter. Note that this filter, if successful, determines an index $p\in \{1,\dots, n\}$  such that $\pi_p$ is far from all points on $\sigma$, or symmetrically an index $q\in \{1, \dots, m\}$ such that $\sigma_q$ is far from all points on $\pi$. Thus, in these cases, we can simply return the NO certificate $((p, 1), (p, m))$ or $((n,q), (1, q))$, respectively.

\paragraph{Certifying NO instances in the complete decider.}
Whenever the complete decider via free-space exploration returns a negative answer, the explored parts of the free-space diagram must be sufficient to derive a negative answer. This gives rise to the following approach: Consider all non-free segments computed by the complete decider. We start from a non-free segment touching  the bottom or right boundary and traverse non-free segments (possibly also making use of monotonicity steps according to Case~\ref{enum:monotonicity} of Definition~\ref{def:nocert}) and stop as soon as we have found a non-free segment touching the left or top boundary.

Formally, consider Algorithm~\ref{alg:NOcertificate}. Here, we use the notation that $\lowerRight(I)$ denotes the lower right endpoint of $I$, i.e., the right endpoint if $I$ is a horizontal segment and the lower endpoint if $I$ is a vertical segment. Analogously, $\upperLeft(I)$ denotes the upper left endpoint of $I$.

\begin{algorithm}[t]
\begin{algorithmic}[1]
\Procedure{ComputeNOCertificate}{$\pi, \sigma, \delta$}
\State $N \gets $ non-free segments determined by \Call{CompleteDecider}{$\pi, \sigma, \delta$}
\State $Q \gets \{ I \in N \mid \lowerRight(I) \text{ lies on bottom or right boundary} \}$
\State Build orthogonal range search data structure $D$,\\ \hfill storing all $I\in N \setminus Q$ under the key $\lowerRight(I)$.
\While{$Q \ne \emptyset$}
\State Pop any element $I$ from $Q$
\If{$\upperLeft(I)$ lies on top or left boundary}
\State Reconstruct sequence of intervals leading to $I$
\State \Return corresponding NO certificate
\Else
\State $Q' \gets D.\Call{ReportAndDelete}{\upperLeft(I)}$ \\\Comment{reports $J$ if $\lowerRight(J)$ is to the lower right of $\upperLeft(I)$}
\State $Q \gets Q \cup Q'$
\EndIf
\EndWhile
\EndProcedure
\end{algorithmic}
\caption{High-level code for computing a NO certificate.}
\label{alg:NOcertificate}
\end{algorithm}

The initial set of non-free segments in Algorithm~\ref{alg:NOcertificate} consists of the non-free segments of \emph{all simple boundaries} determined by the complete decider via free-space exploration. We maintain a queue $Q$ of non-free segments, which initially contains all non-free segments touching the right or bottom boundary. Furthermore, we maintain a data structure $D$ of yet \emph{unreached} non-free intervals. Specifically, we require $D$ to store intervals $I$ under the corresponding key $\lowerRight(I) \in [1,n]\times[1,m]$ in a way to support the query $\Call{ReportAndDelete}{p}$: Such a query returns all $I \in D$ such that $\lowerRight(I)$ lies to the lower right of $p$ and deletes all returned intervals from $D$. 

Equipped with such a data structure, we can traverse all elements in the queue as follows: We delete any interval $I$ from $Q$ and check whether it reaches the upper or left boundary. If this is the case, we have (implicitly) found a NO certificate, which we then reconstruct (by memorizing why each element of the queue was put into the queue). Otherwise, we add to $Q$ all intervals from $D$ that can be reached by a monotone step (according to Case~\ref{enum:monotonicity} of Definition~\ref{def:nocert}) from $\upperLeft(I)$; these intervals are additionally deleted from $D$. 

To implement $D$, we observe that it essentially asks for a 2-dimensional orthogonal range search data structure where the ranges are unbounded in two directions (and bounded in the other two). Already for the case of 2-dimensional ranges with only a single unbounded direction (sometimes called 1.5-dimensional), a very efficient solution is provided by a classic data structure due to McCreight, the priority search tree data structure~\cite{McCreight85}. We can adapt it in a straightforward manner to implement $D$ such that it (1) takes time $O(d \log d)$ and space $O(d)$ to construct $D$ on an initial set of size $d$ and (2) supports $\Call{ReportAndDelete}{p}$ queries in time $O(k + \log d)$, where $k$ denotes the number of reported elements. Thus, Algorithm~\ref{alg:NOcertificate} can be implemented to run in time $O(|N| \log |N|)$.

{
\newcommand\setrow[1]{\gdef\rowmac{#1}#1\ignorespaces}
\newcommand\clearrow{\global\let\rowmac\relax}
\clearrow
\newcommand{\lvl}{~~~}
\begin{landscape}
\begin{table}
\vspace{-1.5cm}
\centering
\begin{tabular}{|>{\rowmac}l|>{\rowmac}r>{\rowmac}r>{\rowmac}r>{\rowmac}r>{\rowmac}r<{\clearrow}|}
\hline
& \multicolumn{5}{c|}{\sigspatial} \\
\hline
$k$		& 0 & 1 & 10 & 100 & 1000 \\
\hline
\setrow{\bfseries} computation without certification & 6.9 & 21.3 & 84.5 & 429.4 & 1409.1\\  
\setrow{\bfseries} certifying computation & 10.0 & 29.6 & 117.8 & 553.8 & 1840.2\\
\lvl --computation of certificates & 1.0 & 3.7 & 12.0 & 40.9 & 65.7\\
\lvl \lvl --YES certificates (complete decider) & 0.0 & 0.4 & 1.6 & 8.2 & 12.1\\
\lvl \lvl --NO certificates (complete decider) & 1.0 & 3.3 & 10.0 & 31.4 & 50.0\\
\setrow{\bfseries} checking certificates & 6.4 & 13.9 & 63.7 & 426.5 & 3803.2\\
\lvl --checking filter certificates & 6.0 & 11.3 & 51.6 & 361.2 & 3666.7\\
\lvl --checking complete decider certificates & 0.4 & 2.6 & 12.1 & 65.3 & 136.5\\
\hline
\end{tabular}

\vspace{1em}

\begin{tabular}{|>{\rowmac}l|>{\rowmac}r>{\rowmac}r>{\rowmac}r>{\rowmac}r>{\rowmac}r<{\clearrow}|}
\hline
& \multicolumn{5}{c|}{\characters} \\
\hline
$k$		& 0 & 1 & 10 & 100 & 1000 \\
\hline
\setrow{\bfseries} computation without certification &  12.1 & 55.5 & 205.8 & 1052.8 & 4080.3\\
\setrow{\bfseries} certifying computation & 20.3 & 91.6 & 311.2 & 1589.8 & 5895.8\\
\lvl --computation of certificates & 4.0 & 20.0 & 59.6 & 220.7 & 470.2\\
\lvl \lvl --YES certificates (complete decider) & 0.0 & 0.5 & 2.3 & 24.9 & 181.3\\
\lvl \lvl --NO certificates (complete decider) & 3.9 & 19.2 & 56.3 & 186.2 & 259.4\\
\setrow{\bfseries} checking certificates & 6.3 & 21.6 & 76.8 & 457.7 & 2626.1\\
\lvl --checking filter certificates & 5.3 & 14.8 & 49.7 & 278.0 & 1759.8\\
\lvl --checking complete decider certificates & 1.0 & 6.8 & 27.2 & 179.6 & 866.3\\
\hline
\end{tabular}

\vspace{1em}

\begin{tabular}{|>{\rowmac}l|>{\rowmac}r>{\rowmac}r>{\rowmac}r>{\rowmac}r>{\rowmac}r<{\clearrow}|}
\hline
& \multicolumn{5}{c|}{\geolife} \\
\hline
$k$		& 0 & 1 & 10 & 100 & 1000 \\
\hline
\setrow{\bfseries} computation without certification & 82.2 & 251.1 & 1156.6 & 3663.1 & 11452.4\\
\setrow{\bfseries} certifying computation & 142.1 & 414.6 & 1834.4 & 5304.2 & 16248.7\\
\lvl --computation of certificates & 40.1 & 100.7 & 388.0 & 767.7 & 1827.8\\
\lvl \lvl --YES certificates (complete decider) & 0.0 & 3.2 & 20.0 & 87.6 & 247.5\\
\lvl \lvl --NO certificates (complete decider) & 39.7 & 96.7 & 364.5 & 664.8 & 1517.1\\
\setrow{\bfseries} checking certificates & 70.9 & 185.2 & 733.7 & 3595.4 & 20188.4\\
\lvl --checking filter certificates & 45.2 & 85.9 & 283.8 & 1754.0 & 12987.5\\
\lvl --checking complete decider certificates & 25.7 & 99.4 & 450.0 & 1841.4 & 7200.9\\
\hline
\end{tabular}
\caption{Certificate computation and check times on query setting benchmark (in ms). The first and second bold lines show the running time of our implementation compiled without and with certification, respectively. For the certifying variant, we also give the times to compute YES and NO certificates of the complete decider (note that filter certificates are computed on the fly by the filters and hence cannot be separately measured). Finally, we give running times for checking correctness of certificates.}
\label{tab:cert_experiments}
\end{table}
\end{landscape}
}

\subsection{Certificate Checker}
\label{sec:certificatechecker}

It remains to describe how to check the correctness of a given certificate $T = (t_1,\dots, t_\ell)$. For this, we simply verify that all properties of Definition~\ref{def:yescert} or Definition~\ref{def:nocert} are satisfied.

\paragraph{Checking YES certificates.}
Observe that the only conditions in the definition of YES instances are either simple comparisons of neighboring elements $t_k, t_{k+1}$ in the sequence or \emph{freeness tests}, specifically, whether a give position $p \in [1,n]\times [1,m]$ is free, i.e, whether $\pi_{p_1}$ and $\sigma_{p_2}$ have distance at most $\delta$. The latter test only requires interpolation along a curve segment (to obtain $\pi_{p_1}$ and $\sigma_{p_2}$) and a Euclidean distance computation. Thus, YES certificates are extremely simple to check. 

\paragraph{Checking NO certificates.}
Checking NO certificates involves a slightly more complicated geometric primitive than the freeness tests of YES certificates. Apart from simple comparisons of neighboring elements $t_{k}, t_{k+1}$, the conditions in the definition involve the following \emph{non-freeness tests}: Given a (sub)segment $\pi_{p..p'}$ with $i \le p \le p' \le i+1$ for some $i\in [n]$, as well as a point $\sigma_q$ with $q\in [1,m]$, determine whether all points on $\pi_{p..p'}$ have distance strictly larger than $\delta$ from $\sigma_q$. Besides the (simple) interpolation along a line segment to obtain $\sigma_q$, we need to determine intersection points of the line containing $\pi_{p..p'}$ and the circle of radius $\delta$ around $\sigma_q$ (if these exists). From these intersection points, we verify that $\pi_{p..p'}$ and the circle do not intersect, concluding the check.

\paragraph{}
In summary, certificate checkers are straightforward and simple to implement. 

\subsection{Certification Experiments}
\label{sec:certificateExperiments}

We evaluate the overhead introduced by computing certificates using our benchmark sets for the query setting. In particular, as our implementation can be compiled both as a certifying and a non-certifying version, we compare the running times of both versions. The results are depicted in Table~\ref{tab:cert_experiments}. Notably, the slowdown factor introduced by computing certificates ranges between 1.29 and 1.46 (\sigspatial), 1.44 and 1.67 (\characters) and 1.42 and 1.73 (\geolife). As expected, the certificate computation time is dominated by the task of generating NO certificates (which is more complex than computing YES certificates), even for large values of $k$ for which most unfiltered instances are YES instances.

At first sight, it might be surprising that checking the certificates takes longer than computing them. However, this is due to the fact that our filters often display sublinear running time behavior (by using the heuristic checks and adaptive step sizes). However, to keep our certificate checker elementary, we have not introduced any such improvements to the checker, which thus has to traverse essentially all points on the curves. This effect is particularly prominent for large values of $k$. 


\section{Conclusion}
In this work we presented an implementation for computing the Fréchet distance which beats the state-of-the-art by one to two orders of magnitude in running time in the query as well as the decider setting. Furthermore, it can be used to compute certificates of correctness with little overhead. To facilitate future research, we created two benchmarks on several data sets -- one for each setting -- such that comparisons can easily be conducted. Given the variety of applications of the Fréchet distance, we believe that this result will also be of broader interest and implies significant speed-ups for other computational problems in practice.

This enables a wide range of future work. An obvious direction to continue research is to take it back to theory and show that our pruning approach provably has subquadratic runtime on a natural class of realistic curves. On the other hand, one could try to find further pruning rules or replace the divide-and-conquer approach by some more sophisticated search. To make full use of the work presented here, it would make sense to incorporate this algorithm in software libraries. Currently, we are not aware of any library with a non-naive implementation of a Fréchet distance decider or query. Finally, another possible research direction would be to work on efficient implementations for similar problems like the Fréchet distance under translation, rotation or variants of map matching with respect to the Fréchet distance. In summary, this paper should lay ground to a variety of improvements for practical aspects of curve similarity.

\bibliographystyle{plainurl}
\bibliography{trunk/biblio}

\end{document}